\documentclass[12pt]{article}

\usepackage[utf8]{inputenc}
\usepackage[T1]{fontenc}
\usepackage[letterpaper, margin=3.5cm]{geometry}
\usepackage{url}
\usepackage{graphicx}
\usepackage{subcaption}
\usepackage{wrapfig}
\usepackage{amssymb, amsmath, amsthm, amsfonts}
\usepackage{mathtools}
\usepackage{tcolorbox}
\usepackage{multirow}
\usepackage{threeparttable}
\usepackage{setspace}
\usepackage{indentfirst}
\usepackage{hyperref}
\usepackage{theoremref}
\usepackage{soul}

\singlespacing
\setlength{\parskip}{1em}
\textwidth 6.5in

\theoremstyle{definition}
\newtheorem{theorem}{Theorem}[section]
\newtheorem{proposition}{Proposition}[section]

\newtheorem{lemma}[theorem]{Lemma}

\AtBeginEnvironment{definition}{%
  \pushQED{\qed}%
}
\AtEndEnvironment{definition}{\popQED\endexample}
\newtheorem{notation}{Notation}[section]
\AtBeginEnvironment{notation}{%
  \pushQED{\qed}%
}

\AtBeginEnvironment{example}{%
  \pushQED{\qed}%
}
\AtEndEnvironment{example}{\popQED\endexample}
\newtheorem{remark}{Remark}[section]
\newtheorem{assumption}{Assumption}[section]
\numberwithin{equation}{subsection}

\title{\textbf{Impact of Resistance Development Mechanisms on Antibiotic Treatment Outcomes}}
\author{Ailin Zhang, Shigui Ruan, and Xi Huo\footnote{Correspondence: x.huo@math.miami.edu}\\[2mm]
{\small Department of Mathematics, University of Miami, Coral Gables, FL 33146, USA}}
\date{}  

\begin{document}

\maketitle

\begin{abstract}
Bacteria develop resistance to antibiotics through various mechanisms, with the specific mechanism depending on the drug-bacteria pair. It remains unclear, however, which resistance mechanism best supports favorable treatment outcomes, specifically in clearing infections and inhibiting further resistance. In this study, we use periodic ordinary differential equation models to simulate different antibiotic treatment protocols for bacterial infections. Using stability analysis and numerical simulations, we investigate how different resistance mechanisms, including plasmid-induced and mutation-induced resistance, affect treatment outcomes. Our findings suggest that antibiotic treatments with fixed dosing schedules are more likely to be effective when resistance arises exclusively through plasmid-mediated transmission. Further, when treatment fails, mutation-driven mechanisms tend to favor the selection of fully resistant bacterial strains. We also investigated the efficacy of different treatment strategies based on these mechanisms, finding that a twice-daily regimen consistently outperforms a once-daily regimen in terms of infection clearance. Additionally, our simulations with short half-life antibiotics indicate that the "catch-up" strategy outperforms the "compensatory double-dose" approach after a missed dose, a finding that aligns with general pharmaceutical advice for short-half-life drugs.
\end{abstract}

\section{Introduction}
Over the past several decades, antibiotic-resistant bacteria (ARB) have emerged rapidly and spread worldwide. Infections caused by ARB are challenging to treat, leading to prolonged illness, disability, increased mortality, and higher healthcare costs. Each bacterial isolate causing infections may harbor a mix of antibiotic-resistant and susceptible populations, and the antibiotic concentration required to eliminate the bacterial population varies significantly based on the genetic material present. Resistance can further develop through various mechanisms during treatments. Therefore, the well-recognized "hit early and hit hard" concept emphasizes initiating treatment with an appropriate antibiotic dose and frequency so as to effectively clear the infection and prevent the development of resistance.

Bacteria develop resistance to antibiotics mostly by making genetic changes via two major mechanisms \cite{Hall2016,Hall2017}: chromosomal mutation - acquired by spontaneous mutations on the DNA, and horizontal gene transfer - acquired by bacteria-to-bacteria transmission of plasmids (small circular DNA). The specific mechanisms through which bacteria develop a certain resistance to an antibiotic depend highly on the bacteria-antibiotic combination. In modeling, chromosomal mutations are captured as a linear rate or a rate proportional to the logistic net growth rate 
\cite{Ankomah2012,Ankomah2014,Dasbasi2016,Handel2009,Levin2010,Lukacisinova2017,Pena-Miller2012}, mutations via plasmid transmission are characterized by mass action \cite{DAgata2008,DeLeenheer2010,Imran2007,Lopatkin2017,Ternent2015,Zhong2012}, and less work incorporate both mechanisms \cite{Hall2016,DDR2021}. For certain antibiotics, bacteria must obtain multiple genes to achieve full resistance. Thus some studies stratify the bacteria population into more than two compartments with different genotypic levels corresponding to various antibiotic killing rates \cite{Ankomah2014,DAgata2008,Zhong2012}. 

Many studies have explored the impacts of treatment protocols on infection clearance efficacy and resistance development potential, focusing on specific resistance development mechanisms. However, there is a lack of research examining these impacts in the reverse direction, which could provide valuable insights for the theoretical guideline for drug-bug specific antibiotic treatments. For example, vancomycin and ciprofloxacin are both used to treat Methicillin-resistant {\it Staphylococcus aureus} (MRSA) infections, where vancomycin has been discovered decades earlier than ciprofloxacin. While it might be reasonable to assume that MRSA is more resistant to the older antibiotic due to longer exposure, MRSA is, in fact, more resistant to ciprofloxacin than vancomycin. Knowing that MRSA develops resistance to these drugs through different mechanisms — chromosomal mutations for ciprofloxacin and horizontal gene transfer for vancomycin — it is natural to ask how these mechanisms contribute to the observed discrepancy in resistance rates.

In this study, we develop and analyze two periodic ODE models of antibiotic treatment, distinguished by how bacteria acquire resistance: either through plasmid transmission or spontaneous chromosomal mutation. We examine the long-term dynamics of each model by deriving the basic reproduction number, $\mathcal{R}_0$, which characterizes infection clearance, and conducting uniform persistence analysis to identify conditions of treatment failure. Numerical simulations parameterized for MRSA infections treated with moxifloxacin, a fourth-generation fluoroquinolone, reveal three key findings: (1) conventional concepts such as the mutation selection window (MSW) and minimum inhibitory concentration (MIC) are insufficient to reliably predict treatment outcomes across resistance mechanisms; (2) treatment is more likely to succeed when resistance arises via plasmid transmission; and (3) under failed treatments, spontaneous mutation-driven resistance can lead to the selection of fully resistant strains.

The remainder of this paper is organized as follows. Section 2 introduces the baseline model without genetic alteration, followed by models in which bacteria acquire resistance through plasmid transmission and spontaneous mutation. For the latter two cases, we establish the existence and stability of periodic solutions. In Section 3, we parameterize the time-dependent bacterial killing rate and antibiotic concentration functions, as well as the concentration-dependent killing rate function. Section 4 presents numerical simulations, and Section 5 provides concluding discussions.
    
\section{Models}
\subsection{Baseline model with no genetic alteration}
Without the consideration of genetic alteration, the bacterial population can be modeled as one single compartment adopting a logistic growth model. Denote $B(t)$ as the bacterial population at time $t$, we have

\begin{equation}\tag{M1}\label{eq:M1}
\dot B(t) = r B(t)\left(1-\frac{B(t)}{K}\right) - \mu(t)B(t),
\end{equation}

where $r$ is the intrinsic bacterial growth rate. $K$ represents the carrying capacity of the bacterial population, which, in a within-host setting, refers to the maximum bacterial load that can be sustained in the human body. $\mu(t)$ denotes the bacterial killing rate induced by antibiotic treatment, which we model as a periodic function. Since we track the bacterial population and antibiotic treatment on a daily basis, we assume $\mu(t)$ to be a 1-periodic function.
\begin{assumption}\label{assumption1}
$\mu(t)$ is a non-negative, continuous, and 1-periodic function.
\end{assumption}
\begin{proposition}\label{prop1}
For any initial value $B(0)\ge 0$, there exists a unique global solution $B(t)$ to \eqref{eq:M1} that is non-negative for all $t\ge 0$.
\end{proposition}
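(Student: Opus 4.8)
The plan is to prove existence, uniqueness, and non-negativity in that order, using standard ODE theory adapted to the fact that the right-hand side depends on $t$ through the merely continuous function $\mu(t)$. Write $f(t,B) = rB(1 - B/K) - \mu(t)B$. Under Assumption \ref{assumption1}, $f$ is continuous in $t$ and $C^1$ (indeed polynomial) in $B$, hence locally Lipschitz in $B$ uniformly on compact $t$-intervals. By the Picard--Lindel\"of theorem for non-autonomous systems, for any $B(0) = B_0 \ge 0$ there is a unique solution $B(t)$ on some maximal interval of existence $[0, T_{\max})$.

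Next I would establish non-negativity by a comparison/invariance argument. Observe that $B \equiv 0$ is a solution of \eqref{eq:M1}, so by uniqueness no other solution can cross it: if $B_0 > 0$ then $B(t) > 0$ for all $t \in [0, T_{\max})$, and if $B_0 = 0$ then $B(t) \equiv 0$. Alternatively, one can write $B(t) = B_0 \exp\!\left(\int_0^t \big[ r(1 - B(s)/K) - \mu(s) \big]\, ds\right)$ along the solution, which makes the sign-preservation immediate and also shows $B_0 = 0 \Rightarrow B \equiv 0$.

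Finally, to upgrade local existence to global existence ($T_{\max} = \infty$), I would derive an a priori upper bound on $B(t)$. For $B \ge 0$ we have $\dot B \le rB(1 - B/K)$, so by the standard comparison theorem $B(t)$ is dominated by the solution of the logistic equation with the same initial datum, which stays bounded — specifically $B(t) \le \max\{B_0, K\}$ for all $t$ in the interval of existence. Since $\mu(t) \ge 0$ keeps $B$ bounded below by $0$, the solution remains in the compact set $[0, \max\{B_0, K\}]$, so it cannot blow up or escape in finite time; the standard continuation criterion then forces $T_{\max} = \infty$.

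The routine parts are the Lipschitz estimate and the continuation argument; the only point requiring a little care is that $\mu$ is assumed merely continuous (not Lipschitz or smooth), so one must invoke the version of Picard--Lindel\"of that requires only continuity in $t$ together with a Lipschitz condition in the state variable — but this is classical, so there is no genuine obstacle here. The main thing to state cleanly is the a priori bound $B(t) \le \max\{B_0, K\}$, since that is what simultaneously rules out finite-time blow-up and underpins the global conclusion.
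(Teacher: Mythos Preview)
Your argument is correct. The paper does not spell out a separate proof for Proposition~\ref{prop1}; it treats the result as routine and only sketches the method later, in the proof of the two-dimensional analogue Proposition~\ref{prop2}. There the paper invokes exactly the ingredients you use---local Lipschitz continuity of the right-hand side for local existence and uniqueness, an invariance argument for nonnegativity (citing Remark~16.3(f) in Amann~\cite{AH}), comparison with the logistic equation for an a~priori bound, and a continuation theorem (Theorem~7.6 in~\cite{AH}) for global existence---so your approach matches the paper's intended one. Your direct observation that $B\equiv 0$ is a solution (hence uncrossable by uniqueness), together with the explicit exponential representation, is a slightly more self-contained route to nonnegativity than the textbook citation, but the substance is the same.
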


Next, we define the basic reproductive number as 
$$\mathcal{R}_0 = e^{r-\int_0^1 \mu(t)dt},$$
and obtain the following results which are useful for the analysis of the models with resistance development.
\begin{theorem}\label{global_stability}
If $\mathcal{R}_0>1$, then for any initial value $B(0)>0$, system \eqref{eq:M1} has a unique, non-constant, and 1-periodic solution $\Phi(t)$, which is globally asymptotically stable. If $\mathcal{R}_0<1$, then the zero solution $B(t)\equiv 0$ is globally stable.
\end{theorem}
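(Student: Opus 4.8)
The key observation is that \eqref{eq:M1}, rewritten as $\dot B = (r-\mu(t))B - (r/K)B^2$, is a Bernoulli equation. For $B(0)>0$, Proposition~\ref{prop1} together with uniqueness of solutions keeps $B(t)>0$ for all $t\ge 0$ (it cannot cross the solution $B\equiv 0$), so the substitution $u(t)=1/B(t)$ is legitimate on $[0,\infty)$ and converts \eqref{eq:M1} into the linear, $1$-periodic, non-homogeneous equation
\[
\dot u(t) = -\bigl(r-\mu(t)\bigr)\,u(t) + \frac{r}{K}.
\]
I would then run the whole argument on this linear equation via Floquet theory. Writing $A(t)=-\int_0^t\bigl(r-\mu(s)\bigr)\,ds$, the homogeneous part has monodromy multiplier $\rho=e^{A(1)}=e^{-(r-\int_0^1\mu)}=\mathcal{R}_0^{-1}$, and the periodic average of $r-\mu$ equals $\ln\mathcal{R}_0$.

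\emph{Case $\mathcal{R}_0>1$.} Then $0<\rho<1$, so the linear equation has a unique $1$-periodic solution $u^\ast$, obtained from variation of constants with $u^\ast(0)=\frac{\rho}{1-\rho}\int_0^1 e^{-A(s)}\frac{r}{K}\,ds$. Because the forcing term $r/K$ is strictly positive, this formula gives $u^\ast(t)>0$ for all $t$, so $\Phi(t):=1/u^\ast(t)$ is a well-defined, strictly positive $1$-periodic solution of \eqref{eq:M1}; it is non-constant because $\mu$ is non-constant — a constant $u^\ast=c>0$ would force $(r-\mu(t))c\equiv r/K$, i.e.\ $\mu$ constant. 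For global asymptotic stability, note that for any other positive solution $B$, the difference $v=u-u^\ast$ solves $\dot v=-(r-\mu(t))v$, hence $v(t)=v(0)e^{A(t)}\to 0$ since $A(t)\to-\infty$ (its periodic average is $-\ln\mathcal{R}_0<0$). Thus $u(t)-u^\ast(t)\to 0$, equivalently $B(t)-\Phi(t)\to 0$, which gives global asymptotic stability of $\Phi$ on $\{B(0)>0\}$.

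\emph{Case $\mathcal{R}_0<1$.} Here the logistic term only helps: discarding $-(r/K)B^2\le 0$ and applying the comparison principle yields $0\le B(t)\le B(0)\,e^{-A(t)}=B(0)\exp\!\bigl(\int_0^t(r-\mu(s))\,ds\bigr)$, and since the periodic average of $r-\mu$ is $\ln\mathcal{R}_0<0$ the exponent tends to $-\infty$; hence $B(t)\to 0$. Combined with Proposition~\ref{prop1}, this is global stability of the zero solution.

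The only genuinely delicate points, both in the $\mathcal{R}_0>1$ case, are (i) justifying that $B(0)>0$ keeps $B(t)$ strictly positive for all $t$ so that the reciprocal substitution is valid on the whole half-line, and (ii) deducing strict positivity (and hence boundedness away from $0$ and $\infty$) of the periodic solution $u^\ast$ from the sign of the forcing term; both are short but should be written out. As an alternative, substitution-free route I would mention the Poincaré map $P\colon B(0)\mapsto B(1)$, which is monotone increasing, strictly concave, maps $(0,\infty)$ into a compact interval, and satisfies $P'(0)=\mathcal{R}_0$; the dichotomy then follows from the standard theory of concave monotone maps on the line.
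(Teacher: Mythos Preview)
Your proof is correct and is essentially the same approach the paper uses: although the paper does not spell out a proof of Theorem~\ref{global_stability}, the explicit periodic solution it records in~\eqref{Bs*} (and again in~\eqref{Phi2}) is exactly $\Phi(t)=1/u^\ast(t)$ coming from your Bernoulli substitution, and the later reference ``similar to the proof of Theorem~\ref{global_stability}'' in the proof of Lemma~\ref{M_2} confirms this is the intended argument. One small caveat worth flagging: the ``non-constant'' assertion in the statement tacitly requires $\mu$ to be non-constant (Assumption~\ref{assumption1} alone does not guarantee this), which you already noticed.
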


\subsection{Resistance acquired through plasmid transmission}
We consider the case where bacteria acquire resistance through a single mutation gene carried by a plasmid, which is transmitted between bacteria via horizontal gene transfer (HGT). One example of this mechanism is the resistance development in MRSA against vancomycin. HGT has been considered as a nonlinear process that is quite similar to the transmission of infectious diseases. Prior studies modeled this process by either a mass action \cite{DAgata2008,Hall2016} or a mix of density- and frequency- dependent functions \cite{DDR2021}. We here adopt the density- and frequency- dependent functional response transmission rate due to its generalizability. Denote $B_p$ and $B_s$ as the plasmid-bearing and plasmid-free bacterial population, respectively, and $B$ as the total population, we have

\begin{equation}
\begin{split}
\dot B_s(t)&=r_sB_s(t)(1-\frac{B(t)}{K})-\mu_s(t)B_s(t)-\frac{\beta B_s(t)B_p(t)}{a+bB(t)}+\frac{p}{2} r_pB_p(t)\\
\dot B_p(t)&=r_pB_p(t)(1-\frac{B(t)}{K})-\mu_p(t)B_p(t)+\frac{\beta B_s(t)B_p(t)}{a+bB(t)}-\frac{p}{2}r_pB_p(t)
\end{split} \tag{M2} \label{Bsp}
\end{equation}
where $r_s$ and $r_p$ refer to the intrinsic growth rate for each population, and $\mu_s(t)$ and $\mu_p(t)$ are the antibiotic killing rates which we assume to be 1-periodic. $p$ is the probability that a plasmid is lost during the bacterial cell division - when this happens, one of the two daughter cells will become plasmid-free. $\beta$ is the flux rate of HGT.
\begin{notation}\label{rmaxmin}
    $r_{\text{max}}\ =\max\{r_s, r_p\}$ and $r_{\text{min}}\ =\min\{r_s,r_p\}$.
\end{notation}
\begin{assumption}\label{musp}
    $\mu_s(t)$ and $\mu_p(t)$ are non-negative, continuous, and 1-periodic functions.
\end{assumption}
\begin{proposition}\label{prop2}
For any initial value $B_s(0),B_p(0)\ge 0$, \eqref{Bsp} has a unique and bounded global solution that is non-negative for all $t\ge 0$. Furthermore, for any $q>0$, there exists $t_q>0$ such that the solution of \eqref{Bsp} with $t>t_q$ lies in the compact set
\begin{equation}
\mathbb{D}_q \ = \ \{(B_s,B_p)\in \mathbb{R}_+^2:B_s + B_p\le \frac{r_{max}}{r_{min}}\cdot K +q\}.
\end{equation}
\end{proposition}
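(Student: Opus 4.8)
The plan is to establish local existence and uniqueness by the usual Picard--Lindel\"of argument, then derive non-negativity and a global \emph{a priori} bound by examining the total population $B=B_s+B_p$; global existence and the eventual confinement in $\mathbb{D}_q$ then follow at once. First, the right-hand side of \eqref{Bsp} is locally Lipschitz on $\mathbb{R}_+^2$ (the denominator $a+bB$ is bounded below by $a$, and $B_sB_p/(a+bB)$ has bounded partial derivatives on bounded sets), so there is a unique solution through any $(B_s(0),B_p(0))\in\mathbb{R}_+^2$ on a maximal interval $[0,t_{\max})$.

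For non-negativity, observe that on the face $\{B_s=0\}$ of $\mathbb{R}_+^2$ one has $\dot B_s=\tfrac{p}{2}r_pB_p$, and on $\{B_p=0\}$ one has $\dot B_p=0$; in other words, the only term of $\dot B_s$ not carrying a factor $B_s$ is $\tfrac{p}{2}r_pB_p\ge 0$, while every term of $\dot B_p$ carries a factor $B_p$ (using $\beta B_sB_p/(a+bB)\ge 0$). Hence the vector field points into $\mathbb{R}_+^2$ (or is tangent to its boundary), and the standard invariance criterion gives $B_s(t),B_p(t)\ge 0$ for $t\in[0,t_{\max})$.

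The main step is the a priori bound. Adding the two equations of \eqref{Bsp}, the horizontal-gene-transfer terms $\mp\beta B_sB_p/(a+bB)$ cancel and so do the plasmid-loss terms $\pm\tfrac{p}{2}r_pB_p$, leaving
\[
\dot B(t) = \Bigl(1-\tfrac{B(t)}{K}\Bigr)\bigl(r_sB_s(t)+r_pB_p(t)\bigr) - \mu_s(t)B_s(t) - \mu_p(t)B_p(t).
\]
Since $\mu_s,\mu_p\ge 0$ and $r_{\min}B\le r_sB_s+r_pB_p\le r_{\max}B$, distinguishing the cases $B\le K$ and $B>K$ (the two-sided estimate on $r_sB_s+r_pB_p$ being exactly what is needed to handle the sign change of $1-B/K$) yields the single logistic differential inequality $\dot B\le r_{\max}B\bigl(1-\tfrac{r_{\min}}{r_{\max}K}B\bigr)$, valid for all $t$. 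Comparison with the scalar logistic equation of carrying capacity $\tfrac{r_{\max}}{r_{\min}}K$ then shows that $B$ is bounded on $[0,t_{\max})$ and that $\limsup_{t\to\infty}B(t)\le\tfrac{r_{\max}}{r_{\min}}K$.

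Finally, boundedness of $(B_s,B_p)$ rules out finite-time blow-up, so $t_{\max}=\infty$ and the solution is global; and from $\limsup_{t}B(t)\le\tfrac{r_{\max}}{r_{\min}}K$, for every $q>0$ there is a time $t_q>0$ with $B_s(t)+B_p(t)\le\tfrac{r_{\max}}{r_{\min}}K+q$ for all $t>t_q$, i.e.\ the solution enters and stays in $\mathbb{D}_q$. I expect the only genuine subtlety to be in this a priori estimate, namely controlling the sign of $1-B/K$ when $r_s\ne r_p$; this is precisely what forces the non-sharp constant $r_{\max}/r_{\min}$, whereas the tighter bound $K$ already follows from the cruder observation that $\dot B\le 0$ whenever $B\ge K$.
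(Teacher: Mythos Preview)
Your proposal is correct and follows essentially the same approach as the paper: local Lipschitz continuity for local existence, invariance of $\mathbb{R}_+^2$ checked on the boundary faces, and the key logistic differential inequality $\dot B\le r_{\max}B\bigl(1-\tfrac{r_{\min}}{r_{\max}K}B\bigr)$ obtained by summing the two equations and dropping the $\mu$-terms. The paper reaches this inequality slightly more directly by bounding the growth and competition terms separately ($r_sB_s+r_pB_p\le r_{\max}B$ and $r_s\tfrac{B_sB}{K}+r_p\tfrac{B_pB}{K}\ge r_{\min}\tfrac{B^2}{K}$), which avoids your case distinction on the sign of $1-B/K$, but the argument is otherwise the same.
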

\begin{proof}
Similar to the proof of Proposition \ref{prop1}, it is easy to show that the right hand side function $f(t,\cdot): \mathbb{R}^2 \to \mathbb{R}^2$ is locally Lipschitz continuous. We thus have the local existence and uniqueness of the solution. By Remark $16.3(f)$ in \cite{AH}, we can show that $\mathbb{R}_+^2$ is an invariant region: for $B_s(0)=0$ and $B_p(0)>0$, we have $\dot B_s(t)=\frac{p}{2}r_pB_p(t)>0$, for $t\in [0,\epsilon)$ with $\epsilon$ being sufficiently small, then $B_s(t)\ge 0$ for $t\in [0,\epsilon)$; similarly, for $B_s(0)>0$ and $B_p(0)=0$, then $\dot B_p(t)\equiv 0$, and we have $B_p(t)\equiv 0$ and $B_s(t)>0$ for $t\in [0,\epsilon)$ with $\epsilon$ being sufficiently small. We therefore obtain the nonnegativity of the solution. The global existence of the solution follows from Theorem 7.6 in \cite{AH} provided with the uniform boundedness shown below.

Let $B= B_p+B_s$, we have
\begin{flalign*}
    \dot B(t) =& r_pB_p(t) - r_p\frac{B_p(t)B(t)}{K} + r_s B_s(t) - r_s\frac{B_s(t)B(t)}{K} - \mu_p(t)B_p(t) - \mu_s(t)B_s(t)\\
    \le & r_pB_p(t)+r_s B_s(t)- r_p\frac{B_p(t)B(t)}{K}-r_s\frac{B_s(t)B(t)}{K}\\
    \le & r_{max} B(t)-r_{min}\frac{B(t)^2}{K} = r_{max}B(t)(1-\frac{B(t)}{\tfrac{r_{max}}{r_{min}}K}).
\end{flalign*}
 Notice that $\hat{B}'(t)=r_{max}\hat{B}(t)(1-\frac{\hat{B}(t)}{\frac{r_{max}}{r_{min}}K})$ is a logistic growth model with carrying capacity $\tfrac{r_{max}}{r_{min}}K$. Thus under the initial condition $\hat{B}(0)=(B_s(0),B_p(0))\in \mathbb{R}_+^2$, for any $q>0$, there exists $t_q>0$ such that $\hat{B}(t)\le \tfrac{r_{max}}{r_{min}}K + q$ for $t>t_q$. By the comparison principle, we obtain that $B(t)\le \tfrac{r_{max}}{r_{min}}K + q$ for $t>t_q$.
\end{proof}

Now we define the basic reproductive numbers and a periodic equilibrium below.
\begin{notation}
    Denote $\mathcal{R}_0^s:= e^{\int_0^1 r_s-\mu_s(t)dt}$, which represents the multiplicative growth factor of plasmid-free offspring over one treatment period. Define $\mathcal{R}_{0}^p := e^{\int_0^1 r_p(1-\frac{p}{2})-\mu_p(t)dt}$, which measures the multiplicative growth factor of plasmid-bearing offspring over one treatment period. 
\end{notation}
\begin{notation}
    Define $$\mathcal{R}_1^s = e^{\int_0^1 r_s(1-2B_s^*(t)/K)-\mu_s(t)dt}$$
    and $$\mathcal{R}_1^p = e^{\int_0^1 \frac{\beta B_s^*(t)}{a+bB_s^*(t)}+(1-\frac{p}{2}-B_s^*(t)/K)r_p-\mu_p(t)dt}$$ where
    \begin{equation}\label{Bs*}
    B_s^*(t)=\frac{Ke^{\int_{0}^t r_s-\mu_s(u)du}}{r_s\int_{0}^te^{\int_{0}^sr_s-\mu_s(u)du}ds+\frac{r_s\int_{0}^1e^{\int_{0}^sr_s-\mu_s(u)du}ds}{e^{\int_{0}^1 r_s-\mu_s(u)du}-1}}.
    \end{equation}
\end{notation}
\begin{theorem}\label{R_0^sp}
(1) The infection-free equilibrium $(0,0)$ is locally asymptotically stable if $\max\{\mathcal{R}_0^s,\mathcal{R}_0^{p}\}<1$.\\
(2) When $\int_{0}^1 r_s-\mu_s(t)dt>0$ and $\max\{\mathcal{R}^s_1,\mathcal{R}^p_1\}<1$, there exists a 1-periodic plasmid-free equilibrium $(0,B_s^*(t))$, which is locally asymptotically stable.
\end{theorem}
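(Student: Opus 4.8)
The plan is to apply the principle of linearized stability for $1$-periodic systems, exploiting the fact that in both parts the variational equation around the equilibrium in question is upper triangular, so that its two Floquet multipliers can be read off as the period-averaged exponentials of the diagonal entries, and these turn out to be exactly $\mathcal{R}_0^{s},\mathcal{R}_0^{p}$ (for part (1)) and $\mathcal{R}_1^{s},\mathcal{R}_1^{p}$ (for part (2)).

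For part (1), I would first check that $(0,0)$ is an equilibrium and compute the Jacobian of the right-hand side of \eqref{Bsp} there. Every term in the $\dot B_p$-equation carries a factor $B_p$ — in particular the saturating HGT term $\beta B_sB_p/(a+bB)$ — so the lower-left Jacobian entry vanishes at $B_p=0$, and the variational equation reads
\[
\frac{d}{dt}\begin{pmatrix}u\\ v\end{pmatrix}=\begin{pmatrix}r_s-\mu_s(t) & \tfrac{p}{2}r_p\\ 0 & r_p(1-\tfrac{p}{2})-\mu_p(t)\end{pmatrix}\begin{pmatrix}u\\ v\end{pmatrix}.
\]
For a triangular $1$-periodic linear system, integrating the bottom scalar equation and substituting into the top one shows that the monodromy matrix over $[0,1]$ is again triangular, with diagonal entries $e^{\int_0^1(r_s-\mu_s)\,dt}=\mathcal{R}_0^{s}$ and $e^{\int_0^1(r_p(1-p/2)-\mu_p)\,dt}=\mathcal{R}_0^{p}$; these are the Floquet multipliers. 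When $\max\{\mathcal{R}_0^{s},\mathcal{R}_0^{p}\}<1$ both lie strictly inside the unit disk, so the zero solution of the variational equation is asymptotically stable, hence $(0,0)$ is locally asymptotically stable for \eqref{Bsp}.

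For part (2), I would use that the face $\{B_p=0\}$ is invariant, and that on it \eqref{Bsp} reduces to the $1$-periodic logistic equation $\dot B_s=r_sB_s(1-B_s/K)-\mu_s(t)B_s$, i.e.\ \eqref{eq:M1} with $r,\mu$ replaced by $r_s,\mu_s$. The hypothesis $\int_0^1(r_s-\mu_s)\,dt>0$ is precisely the condition $\mathcal{R}_0>1$ for this reduced equation, so Theorem \ref{global_stability} gives a unique non-constant $1$-periodic solution; substituting $y=1/B_s$ turns the equation into the linear Bernoulli equation $y'=-(r_s-\mu_s(t))y+r_s/K$, and imposing $y(1)=y(0)$ yields exactly the formula \eqref{Bs*}, so $(B_s^*(t),0)$ is a $1$-periodic equilibrium of \eqref{Bsp}. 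Linearizing \eqref{Bsp} about it, and again using that all $B_s$-dependent terms of the $\dot B_p$-equation vanish on $\{B_p=0\}$, the variational equation is upper triangular with diagonal entries $r_s(1-2B_s^*(t)/K)-\mu_s(t)$ and $r_p(1-p/2-B_s^*(t)/K)-\mu_p(t)+\beta B_s^*(t)/(a+bB_s^*(t))$. As in part (1) the Floquet multipliers are the period-exponentials of these, namely $\mathcal{R}_1^{s}$ and $\mathcal{R}_1^{p}$, so $\max\{\mathcal{R}_1^{s},\mathcal{R}_1^{p}\}<1$ gives local asymptotic stability. (Integrating the identity $\tfrac{d}{dt}\ln B_s^*(t)=r_s-\mu_s(t)-r_sB_s^*(t)/K$ over one period shows $\int_0^1 r_sB_s^*(t)/K\,dt=\int_0^1(r_s-\mu_s)\,dt$, so in fact $\mathcal{R}_1^{s}=e^{-\int_0^1 r_sB_s^*(t)/K\,dt}<1$ automatically and the binding condition is $\mathcal{R}_1^{p}<1$.)

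The computations are routine once the triangular structure is in hand. The two spots that need care are the differentiation of the saturating term $\beta B_sB_p/(a+bB)$ (where both occurrences of $B=B_s+B_p$ contribute) when assembling each Jacobian, and the Bernoulli calculation confirming that \eqref{Bs*} solves the periodic logistic equation and satisfies $B_s^*(0)=B_s^*(1)$. I do not anticipate a genuine obstacle: the main point is conceptual rather than technical — both linearizations decouple into two scalar Floquet problems, which is precisely why $\mathcal{R}_0^{s,p}$ and $\mathcal{R}_1^{s,p}$ serve as the sharp stability thresholds.
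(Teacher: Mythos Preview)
Your proposal is correct. Part (1) is essentially identical to the paper's proof: linearize at $(0,0)$, observe the triangular structure, and read off the Floquet multipliers as $\mathcal{R}_0^s,\mathcal{R}_0^p$ via Floquet theory (the paper cites Theorem 4.2.1 in \cite{Farkas}).

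For part (2), however, you take a genuinely different route. The paper invokes the Wang--Zhao threshold-dynamics framework for periodic compartmental epidemic models \cite{Wang}: it casts $B_s$ as the ``susceptible'' compartment and $B_p$ as the ``infectious'' one, extracts the matrices $F(t)=\beta B_s^*(t)/(a+bB_s^*(t))$ and $V(t)=\mu_p(t)-r_p(1-\tfrac{p}{2}-B_s^*(t)/K)$, verifies assumptions (A1)--(A7), and then appeals to their Lemma 2.2 to obtain the stability condition $\int_0^1 F/\int_0^1 V<1$, which unpacks to $\mathcal{R}_1^p<1$. You instead linearize \eqref{Bsp} directly about $(B_s^*(t),0)$, note that the Jacobian is again upper triangular because every term in $\dot B_p$ carries a factor $B_p$, and read off the two Floquet multipliers $\mathcal{R}_1^s,\mathcal{R}_1^p$ from the diagonal. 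Your approach is more elementary and entirely self-contained---no external epidemic-threshold machinery is needed---while the paper's framework would scale better to situations where the linearization is not triangular (e.g., more compartments with genuine coupling). Your closing remark that $\mathcal{R}_1^s=e^{-\int_0^1(r_s-\mu_s)\,dt}<1$ is automatic under the standing hypothesis $\int_0^1(r_s-\mu_s)\,dt>0$ is correct and sharper than the paper, which lists $\mathcal{R}_1^s<1$ as part of the hypothesis without noting that it is implied.
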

\begin{proof}
(1) Linearizing model \eqref{Bsp} at the solution $(0,0)$, we have
$$y'(t)=\begin{pmatrix}
    r_s-\mu_s(t)&\frac{p}{2} r_p\\
    0&r_p(1-\frac{p}{2})-\mu_p(t)
\end{pmatrix}y(t).$$
According to Theorem 4.2.1 in \cite{Farkas}, we calculate the monodromy matrix:
$\begin{pmatrix}
    \mathcal{R}_0^{s}&*\\
    0&\mathcal{R}_0^p \end{pmatrix}$.
We thus obtain the conditions for $(0,0)$ being locally asymptotically stable.\\
\\
(2) We apply the method in \cite{Wang} for the threshold analysis of 
$E_0=(B_s^*(t),0)$. To do so, we regard $B_s$ as the \lq\lq susceptible\rq\rq $ $ compartment, $B_p$ as the \lq\lq infectious\rq\rq $ $ compartment, and plasmid as the transmitting pathogen in the setting of disease transmission. $E_0$ is therefore the infection-free equilibrium (IFE). From the analysis of model \eqref{eq:M1}, we have the explicit form of $E_0$ as in \eqref{Bs*}. We then have the matrices:
$$\mathcal{F}(t,x)=\begin{pmatrix}
    0\\
    \frac{\beta B_pB_s}{a+bB}
    \end{pmatrix},
\mathcal{V}= \mathcal{V}^- - \mathcal{V}^+,$$
where
$$\mathcal{V}^+(t,x)=\begin{pmatrix}
r_sB_s+\frac{p}{2} r_pB_p\\r_pB_p\end{pmatrix},\\
\mathcal{V}^-(t,x)=\begin{pmatrix}r_p r_s B_sB/K+\mu_sB_s+\frac{\beta B_pB_s}{a+bB}\\B_pB/K+\mu_pB_p+\frac{p}{2} r_pB_p
\end{pmatrix}.$$
Thus we have
$F(t)=\frac{\beta B_s^*(t)}{a+bB_s^*(t)}$ and
$V(t)=\mu_p(t)-r_p(1-\frac{p}{2}-B_s^*(t)/K)$.

Before apply Theorems 2.1 and 2.2 in \cite{Wang}, we verify assumptions (A1)-(A7), where (A1)-(A5) can be easily verified based on the expressions of $\mathcal{F}$, $\mathcal{V}^+$, and $\mathcal{V}^-$.

To satisfy (A6), $E_0$ needs to be linearly asymptotically stable in the disease-free subspace. We have $\frac{dy}{dt}=\Bigl(r_s(1-2B_s^*(t)/K)-\mu_s(t)\Bigr)y$. From the prior analysis of model \eqref{eq:M1}, we have $B_s^*$ in \eqref{Bs*} as a locally asymptotic stable 1-periodic solution when $\int_{0}^1 r_s(1-2B_s^*(t)/K)-\mu_s(t)dt<0$.

To meet (A7), we look at $\frac{dy}{dt}=\Bigl(r_p(1-\frac{p}{2}-B_s^*(t)/K)-\mu_p(t)\Bigr)y$ and calculate its monodromy matrix $Y(t,0)=e^{\int_0^t r_p(1-\frac{p}{2}-B_s^*(u)/K)-\mu_p(u)du}$. Thus we require $\phi_{-V}(1)=Y(1,0)=e^{\int_0^1 r_p(1-\frac{p}{2}-B_s^*(t)/K)-\mu_p(t)dt}<1$.

We then calculate the threshold for $(B_s^*(t),0)$ being locally asymptotically stable (the $\mathcal{R}_0$ in Lemma 2.2 \cite{Wang}) and get $\frac{\int_0^1 \frac{\beta B_s^*(t)}{a+bB_s^*(t)}dt}{\int_0^1\mu_p(t)-r_p(1-\frac{p}{2}-B_s^*(t)/K)dt}<1$. Thus $(B_s^*,0)$ is locally asymptotic stable if $r_s>\int_{0}^1 \mu_s(t)dt$ and $\max\{\mathcal{R}^s_1,\mathcal{R}^p_1\}<1$.
\end{proof}
\begin{notation}\label{mumin}
    $\mu_{\text{min}}\ =\min\{\mu_s(t), \mu_p(t)\}$.
\end{notation}
\begin{theorem}
If $\int_0^1r_{\max}-\mu_{\min}\ du<0$, $(0,0)$ is a globally stable solution.
\end{theorem}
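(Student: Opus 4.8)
The plan is to collapse the planar system to a scalar differential inequality for the total population $B(t)=B_s(t)+B_p(t)$ and then run a comparison argument against a linear $1$-periodic equation, exactly in the spirit of the boundedness estimate in the proof of Proposition \ref{prop2}. First I would add the two equations in \eqref{Bsp}: the horizontal-gene-transfer terms $\pm\beta B_sB_p/(a+bB)$ cancel, and so do the plasmid-loss terms $\pm\tfrac{p}{2}r_pB_p$, leaving
$$\dot B(t)=(1-B(t)/K)\bigl(r_sB_s(t)+r_pB_p(t)\bigr)-\mu_s(t)B_s(t)-\mu_p(t)B_p(t).$$
Along any solution $B_s,B_p\ge 0$ by Proposition \ref{prop2}, so $r_sB_s(1-B/K)=r_sB_s-r_sB_sB/K\le r_sB_s$ and likewise for the $B_p$ term, whence $(1-B/K)(r_sB_s+r_pB_p)\le r_sB_s+r_pB_p\le r_{\max}B$; combining with $-\mu_s(t)B_s-\mu_p(t)B_p\le-\mu_{\min}(t)B$ gives the key inequality
$$\dot B(t)\le\bigl(r_{\max}-\mu_{\min}(t)\bigr)B(t),\qquad t\ge 0.$$

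Next I would compare $B$ with the solution $z$ of $\dot z=(r_{\max}-\mu_{\min}(t))z$, $z(0)=B(0)$, namely $z(t)=B(0)\exp\!\bigl(\int_0^t(r_{\max}-\mu_{\min}(u))\,du\bigr)$. Writing $t=n+s$ with $n=\lfloor t\rfloor$ and $s\in[0,1)$, $1$-periodicity of $\mu_{\min}$ gives $\int_0^t(r_{\max}-\mu_{\min}(u))\,du=n\int_0^1(r_{\max}-\mu_{\min}(u))\,du+\int_0^s(r_{\max}-\mu_{\min}(u))\,du$. By hypothesis the first summand is $n$ times a negative number, while the second is at most $M_0:=\max_{s\in[0,1]}\int_0^s(r_{\max}-\mu_{\min}(u))\,du<\infty$ (and $M_0\ge 0$ since the integrand's antiderivative vanishes at $0$). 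Hence $\int_0^t(r_{\max}-\mu_{\min})\le M_0$ for all $t\ge 0$ and tends to $-\infty$ as $t\to\infty$, so $z(t)\le B(0)e^{M_0}$ and $z(t)\to 0$. The comparison principle then yields $0\le B(t)\le z(t)$ for all $t\ge 0$.

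Finally I would convert this into global stability of $(0,0)$. Global attractivity: $B(t)\to 0$, and since $0\le B_s(t),B_p(t)\le B(t)$, both components tend to $0$ from every nonnegative initial state. Lyapunov stability (in the $\ell^1$ norm, for which $\|(B_s,B_p)\|=B$ on $\mathbb{R}_+^2$): given $\varepsilon>0$, take $\delta=\varepsilon e^{-M_0}$, so that $\|(B_s(0),B_p(0))\|<\delta$ forces $B(t)\le B(0)e^{M_0}<\varepsilon$ for all $t\ge 0$. Together these give global asymptotic stability of the infection-free equilibrium. I expect the only slightly delicate point to be the periodic-integral bookkeeping — extracting from the single sign condition $\int_0^1(r_{\max}-\mu_{\min})\,du<0$ both the uniform upper bound on $z$ needed for Lyapunov stability and the divergence to $-\infty$ needed for attractivity; the scalar reduction, the sign estimates, and the comparison step itself are routine.
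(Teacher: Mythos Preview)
Your proof is correct and follows essentially the same route as the paper: sum the two equations of \eqref{Bsp} so that the HGT and plasmid-loss terms cancel, bound $\dot B$ above by $(r_{\max}-\mu_{\min}(t))B$, and conclude $B(t)\to 0$ by comparison with the linear $1$-periodic scalar equation. Your write-up is in fact more careful than the paper's brief argument (which only shows attractivity), since you also extract the uniform bound $B(t)\le B(0)e^{M_0}$ needed for Lyapunov stability.
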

\begin{proof}
Let $B= B_p+B_s$. From \ref{Bsp}, $B' = r_pB_p - r_p\frac{B^2}{K} + r_s B_s - r_s\frac{B^2}{K} - \mu_p(t)B_p - \mu_s(t)B_s \le (r_{\max}-\mu_{\min})B$. Thus $M\le e^{\int_0^t r_{\max}-\mu_{\min}\  du}\le e^{t\int_0^1 r_{\max}-\mu_{\min}\  ddu}$. If $\int_0^1r_{\max}-\mu_{\min}\  du<0$, we have that $B\rightarrow 0$ as $t\rightarrow \infty$. Thus $(0,0)$ is a globally stable constant solution of \ref{Bsp}.
\end{proof}

Next, we will examine the uniform persistence under the setting of density-dependent plasmid transformation.
\begin{assumption}\label{abvalue}
Assume $a=1$ and $b=0$.
\end{assumption}
Let $\mathbb{D}_0 = \{(B_s, B_p) \in \mathbb{R}_+^2 : B_p > 0\}$, and let $\partial \mathbb{D}_0 = \mathbb{R}_+^2 \setminus \mathbb{D}_0$. Define the 1-periodic plasmid-free equilibrium as $X_0(t) := (B_s^*(t),0)$, the infection-free equilibrium as $M_1 := \{(0, 0)\}$, and the periodic orbit $M_2 := \{(B_s,0)\in \partial \mathbb{D}_0 : \inf\limits_{t \in [0,1)}B_s^*(t)\le B_s\le \sup\limits_{t \in [0,1)}B_s^*(t)\}$, as well as the Poincar\'e map \( P : \mathbb{R}_+^2 \rightarrow \mathbb{R}_+^2 \) associated with system \eqref{Bsp} by
$$P(z_0) = u(1, z_0)$$
where $z_0 \in \mathbb{R}_+^2 $ and $ u(t, z_0)$ being the unique solution of system \eqref{Bsp} with $z_0$ as the initial condition. Then
$$
P^n(z_0) = u(n, z_0) \quad \text{for } n \geq 0.
$$

\begin{lemma}\label{M_2}
Assume that $\int_{0}^1 r_s-\mu_s(t)dt>0$ and 
$$\int_0^1 \beta B_s^*(t)+r_p(1-\sup_{u\in[0,1)}B_s^*(u)/K-\frac{p}{2})-\mu_p(t)dt>0,$$
then there exists a $\sigma^*>0$ such that for any $z_0 :=(B_s(0),B_p(0))\in \mathbb{D}_0$
we have
$$\lim_{n\rightarrow \infty} \sup \inf\limits_{t\in[0,1)}||P^n(z_0)-X_0(t)||\ge \sigma^*.$$   
\end{lemma}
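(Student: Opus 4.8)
The statement says the plasmid-free periodic orbit $X_0(t)=(B_s^*(t),0)$ is a uniform weak repeller for $\mathbb D_0$, and the plan is to prove it by contradiction: near $X_0$ the $B_p$-component satisfies a scalar linear equation whose exponent, averaged over one period, is bounded below by the positive number in the hypothesis, so $B_p$ cannot remain small. Under Assumption~\ref{abvalue} the $B_p$-equation of \eqref{Bsp} is $\dot B_p=g(t)\,B_p$ with
$$g(t)=r_p\Bigl(1-\tfrac p2\Bigr)-\mu_p(t)+\Bigl(\beta-\tfrac{r_p}{K}\Bigr)B_s(t)-\tfrac{r_p}{K}B_p(t),$$
so $\mathbb D_0$ is forward invariant for $P$, and by Proposition~\ref{prop2} solutions are bounded and $|g|\le G$ for some $G$ on the relevant region. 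The first hypothesis, $\int_0^1 r_s-\mu_s(t)\,dt>0$, is exactly $\mathcal R_0^s>1$, so Theorem~\ref{global_stability} applied to the boundary equation $\dot v=r_sv(1-v/K)-\mu_s(t)v$ gives that $B_s^*$ of \eqref{Bs*} is its globally asymptotically stable $1$-periodic solution, and $0<\underline B:=\inf_{t\in[0,1)}B_s^*(t)\le\overline B:=\sup_{t\in[0,1)}B_s^*(t)<\infty$. Put $g^*(t):=\beta B_s^*(t)+r_p(1-\overline B/K-\tfrac p2)-\mu_p(t)$, a $1$-periodic function with $\eta:=\int_0^1 g^*(t)\,dt>0$ by the second hypothesis.

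Suppose the conclusion fails for every $\sigma^*>0$, and fix $\nu\in(0,\eta)$. I will take $\sigma^*>0$ small; by the assumed failure there is then $z_0\in\mathbb D_0$ and an integer $N$ with $\inf_{t\in[0,1)}\|u(n,z_0)-X_0(t)\|<\sigma^*$ for all integers $n\ge N$, and the argument proceeds in three links. (a) In particular $B_p(n)<\sigma^*$ for all integers $n\ge N$, so Gronwall applied to $\dot B_p=gB_p$ on each $[n,n+1]$ gives $B_p(t)\le\sigma^*e^{G}$ for all $t\ge N$; thus $B_p$ is uniformly small. (b) Once $B_p$ is uniformly small the $B_s$-equation is a small perturbation of the boundary equation; comparing $B_s$ above and below with the (globally attracting) periodic solutions of suitable perturbed logistic equations --- the lower one again handled by Theorem~\ref{global_stability}, since $\mathcal R_0^s>1$ persists under a small perturbation, the upper one by an additional small-perturbation argument --- and using that $B_s$ is bounded, one obtains that $|B_s(t)-B_s^*(t)|$ is eventually small, say for all $t\ge N'$. (c) From the identity
$$g(t)-g^*(t)=\tfrac{r_p}{K}\bigl(\overline B-B_s^*(t)\bigr)+\Bigl(\beta-\tfrac{r_p}{K}\Bigr)\bigl(B_s(t)-B_s^*(t)\bigr)-\tfrac{r_p}{K}B_p(t)$$
and $\overline B-B_s^*(t)\ge0$, the smallness bounds from (a) and (b) give $g(t)\ge g^*(t)-\nu$ for all $t\ge N'$, once $\sigma^*$ is chosen small enough.

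Integrating $\dot B_p/B_p=g$ over $[m,m+1]$ for integers $m\ge N'$ and using the $1$-periodicity of $g^*$,
$$\ln\frac{B_p(m+1)}{B_p(m)}=\int_m^{m+1}g(t)\,dt\ \ge\ \int_0^1 g^*(t)\,dt-\nu\ =\ \eta-\nu\ >\ 0,$$
so $B_p(m)\ge B_p(N')\,e^{(\eta-\nu)(m-N')}\to\infty$ as $m\to\infty$, contradicting the uniform smallness of $B_p$; hence some $\sigma^*>0$ works. The routine pieces are link (a), the algebraic identity in (c), and the final iteration; the genuinely substantive step is (b). It is needed in this pointwise form, rather than through a crude estimate such as $B_s\ge\underline B-\varepsilon$, precisely because the hypothesis carries the time-average $\int_0^1\beta B_s^*(t)\,dt$ of the transmission coefficient and not its infimum, so one really must have $B_s$ close to $B_s^*(t)$ for each $t$; this is exactly the information supplied by the global asymptotic stability of $B_s^*$ in the boundary dynamics, i.e.\ by Theorem~\ref{global_stability}. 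The lemma then serves as the uniform-weak-repeller ingredient in the standard acyclicity argument that yields uniform persistence of $B_p$.
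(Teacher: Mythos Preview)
Your argument is correct and follows the same contradiction strategy as the paper: assume the orbit of some $z_0\in\mathbb D_0$ stays close to $X_0$, deduce that $B_p$ remains uniformly small and $B_s$ tracks $B_s^*$, and then show the $B_p$-equation forces exponential growth. The paper's execution differs only in two technical choices. First, instead of your Gronwall step~(a) and two-sided perturbation step~(b), the paper uses continuous dependence on initial data once: it fixes $\sigma_1$ and chooses $\sigma^*$ so that $\inf_s\|z-X_0(s)\|\le\sigma^*$ implies $\inf_s\|u(t,z)-X_0(s)\|<\sigma_1$ for all $t\in[0,1)$; iterating over integer periods then gives $B_p(t)<\sigma_1$ and $B_s(t)<\sup B_s^*+\sigma_1$ for \emph{all} $t$, without Gronwall or an upper comparison. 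Second, because the paper already has this crude upper bound on $B=B_s+B_p$, it only needs the \emph{lower} pointwise estimate $B_s(t)\ge B_s^*(t)-2q_1$ (obtained, as you do, from the globally attracting periodic solution of a perturbed logistic), and bounds $g(t)$ directly from below by $\beta(B_s^*(t)-2q_1)+r_p(1-\sup B_s^*/K-\sigma_1/K-p/2)-\mu_p(t)$. Your route through the algebraic identity and the two-sided bound on $B_s$ is equally valid but slightly heavier, since the upper comparison $\dot v=r_sv(1-v/K)-\mu_s v+(p/2)r_p\epsilon$ is a non-logistic perturbation and requires its own small-perturbation argument for the existence and global attractivity of a nearby periodic solution; the paper sidesteps this entirely via the continuous-dependence trick.
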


\begin{proof}
Note that with $\int_{0}^1 r_s-\mu_s(t)dt>0$, we have $\forall B_s(0)>0$,
$$\dot B_s(t)=r_sB_s(t)(1-\frac{B_s(t)}{K})-\mu_s(t)B_s(t)$$ 
has a unique equilibrium $B_s^*(t)$ as in \eqref{Bs*} which is globally attractive in $\mathbb{R}_+$. Also, the following system perturbed by $\sigma>0$
\begin{equation} \label{pertubed_s}
\dot{\hat{B_s}}(t) =r_s\hat{B}_s(t)(1-\frac{\hat{B}_s(t)+\sigma}{K})-\mu_s(t)\hat{B}_s(t)-\beta\sigma\hat{B}_s(t)
\end{equation}
admits a unique solution 
\begin{equation} \label{single_sol_2}
	\hat{B}_s(t,\sigma)=\frac{K\hat{B}_s(0)e^{\int_{0}^t a(s,\sigma)ds}}{r_s\hat{B}_s(0)\int_{0}^te^{\int_{0}^sa(u,\sigma)du}ds+K}, \ \forall \hat{B}_s(0)>0.
\end{equation}
And it admits an unique 1-periodic solution
\begin{equation} \label{Phi2}
	\hat{B}_s^*(t,\sigma)=\frac{K}{r_s}\cdot\frac{e^{\int_{0}^t a(s,\sigma)ds}}{\int_{0}^te^{\int_{0}^s a(u,\sigma) du}ds+\frac{\int_{0}^1e^{\int_{0}^s a(u,\sigma) du}ds}{e^{\int_{0}^1 a(s,\sigma)ds}-1}}
\end{equation}
where $a(t,\sigma):=r_s(1-\frac{\sigma}{K})-\beta\sigma-\mu_s(t)$. 
Since $a(t,\sigma)$ is continuous in \(\sigma\), and $\lim_{\sigma \to 0} \int_0^1 a(t,\sigma) dt = \int_0^1 (r_s - \mu_s(t)) dt.$
Then there exists a sufficiently small \(\delta > 0\) such that for all $\sigma\in (0,\delta)$,
\(
\int_0^1 a(t,\sigma) \, dt > 0.
\)
Similar to the proof of Theorem~\ref{global_stability}, for any $\sigma\in (0,\delta)$ we have
\begin{equation}\label{Bstdelta}
\lim_{t\to +\infty}|\hat{B}_s(t,\sigma)-\hat{B}^*_s(t,\sigma)|=0,\,\forall \hat{B}_s(0)>0.
\end{equation}
Applying the implicit function theorem, \(\hat{B}_s^*(0,\sigma)\) depends continuously on \(\sigma\), and
\(
\lim_{\sigma \to 0} \hat{B}_s^*(0,\sigma) = B_s^*(0).
\)
The solution \(\hat{B}_s^*(t,\sigma)\) thus varies continuously with respect to both the initial condition and the parameter. It follows that \(\lim\limits_{\sigma \rightarrow 0} \hat{B}_s^*(t,\ \sigma)=B_s^*(t)\) for all \(t\in [0,1)\), equivalently, \[\lim_{\sigma \rightarrow 0} \sup\limits_{t\in[0,1)}|\hat{B}_s^*(t,\ \sigma)-B_s^*(t)|=0.\] 
Then for any $q_1>0$, there exists $\delta_1>0$ such that for each $\sigma\in(0,\delta_1)$ we have
\begin{equation}\label{lowbd}
\hat{B}_s^*(t,\sigma)>B_s^*(t)-q_1, \ \forall t\in [0,1).
\end{equation}By periodicity of both \(\hat{B}_s^*(t,\sigma)\) and \(B_s^*(t)\), the above inequality extends to all \(t \geq 0\).

Together with \eqref{Bstdelta} we have for a fixed initial condition $\hat{B}_s(0)\in\mathbb{R}_+$ of the perturbed system \eqref{pertubed_s}, for any $q_1>0$ there exists $\delta,\delta_1>0$ so that for any $0<\sigma<\min\{\delta,\delta_1\}$, we have 
$$\hat{B}_s(t,\sigma)\ge B_s^*(t) - 2q_1,$$
for sufficiently large $t$.

By the continuous dependence of solutions on the initial point, for \(0<\sigma_1<\min(\delta,\delta_1)\), there exists a \(\sigma^*>0\) such that any \(z_0\in \mathbb{D}_0\) with
$\inf\limits_{t\in[0,1)}||z_0-X_0(t)|| \le \sigma^*$ implies \(
\inf\limits_{s\in[0,1)}||u(t, z_0)-X_0(s)|| < \sigma_1,\, \forall t\in[0,1).
\)
We now prove that 
\[
\lim_{n\rightarrow \infty} \sup \inf\limits_{t\in[0,1)} ||P^n(z_0)-X_0(t)||\ge \sigma^* .
\]    
Assuming for contradiction that \[
\lim_{n\rightarrow \infty} \sup  \inf\limits_{t\in[0,1)}||P^n(z_0)-X_0(t)||< \sigma^* 
\] for some \(z_0 \in \mathbb{D}_0\).
Without loss of generality, we assume \(
 \inf\limits_{s\in[0,1)}||P^n(z_0)-X_0(s)||< \sigma^* 
\)  for some \(z_0 \in \mathbb{D}_0\), \(\forall n>0\). Thus, we have \(
\inf\limits_{s\in[0,1)}||u(t, P^n(z_0))-X_0(s)|| < \sigma_1, \ \forall n\ge 0, \forall t\in[0,1)
\).
Note that any \(t\ge0\) can be expressed as \(t=n+\tilde{t}\) with \(\tilde{t}\in[0,1)\) and \(n\in \mathbb{Z}_+\). Therefore,\[
\inf\limits_{s\in [0,1)}||u(t, z_0)-X_0(s)||=\inf\limits_{s\in[0,1)}||u(\tilde{t},P^n(z_0))-X_0(s)|| < \sigma_1,\, \forall t\ge 0.\]
Substituting \(u(t,z_0)=(B_s(t),B_p(t))\) and \(X_0(t)=(B_s^*(t),0)\), we obtain that 
\[
    B_p(t)<\sigma_1,\,\text{and} \,B_s(t)< \sup_{s\in[0,1)}B_s^*(s) + \sigma_1,\,\ \forall t>0.
\]
Then from the first equation in model \eqref{Bsp}, we obtain 
\begin{equation}
\dot B_s(t)>r_sB_s(t)(1-\frac{B_s(t)+\sigma_1}{K})-\mu_s(t)B_s(t)-\beta \sigma_1B_s(t),\,\forall t\ge 0.
\end{equation}
With \(\sigma_1<\delta\), we have \(\int_0^1 a(t,\sigma_1)dt>0\). By the Comparison Principle and the analysis of the perturbed system \eqref{pertubed_s}, we have 
\begin{equation}\label{range}
    B_s(t) \ge \hat{B}_s(t,\sigma_1)\ge B_s^*(t)-2q_1
\end{equation} for sufficiently large t.
Then we can estimate the second equation in model \eqref{Bsp}:
\begin{align*}
  \dot B_p(t) &> r_pB_p(t)\Bigl(1-\frac{\sup_{s\in[0,1)} B_s^*(s)}{K} - \frac{\sigma_1}{K}\Bigr) - \mu_p(t)B_p(t) \\
  &\quad + \beta \bigl(B_s^*(t) - 2q_1\bigr)B_p(t) - \frac{p}{2} r_p B_p(t) \\
  &= b(t,q_1,\sigma_1)\cdot B_p(t)
\end{align*}
for sufficiently large $t$.
Where we denote $b(t,q_1,\sigma_1):=\beta B_s^*(t)+r_p(1-\frac{\sup_{s\in[0,1)}B_s^*(s)}{K}-\sigma_1/K-\frac{p}{2})-\mu_p(t)-2q_1$. We can choose $q_1>0$ small enough and $\sigma_1<q_1$ so that $\int_0^1 b(t,q_1,\sigma_1)dt>0$. Then the auxiliary equation 
\begin{equation}\label{auxiliaryBp}
\hat{B}_p'(t) = b(t,q_1,\sigma_1)\hat{B}_p(t)
\end{equation}
has solution $\hat{B}_p(t) = \hat{B}_p(0) e^{\int_0^t b(s,q_1,\sigma_1)ds} \to +\infty$ as $t\to +\infty$, $\forall \hat{B}_p(0)>0$. Assume we have $\dot B_p(t)>b(t,q_1,\sigma_1)B_p(t)$ for $t>t_0$ and consider the solution to equation \eqref{auxiliaryBp} with initial condition $\hat{B}_p(t_0)=B_p(t_0)$ we have $\hat{B}_p(t) = B_p(t_0)e^{\int_{t_0}^t b(s,q_1,\sigma_1)ds} \to +\infty$ as $t\to +\infty$. Then by the Comparison Principle we have $B_p(t)\ge \hat{B}_p(t) \to +\infty$ as $t\to +\infty$, which is a contradiction.
\end{proof}

\begin{lemma}\label{M_1}
If $\int_{0}^1 r_s-\mu_s(t)dt>0$, there exists a $\sigma^*>0$ such that for any $z_0=(B_p(0),B_s(0))\in \mathbb{D}_0$ we have \[
\lim_{n\rightarrow \infty} \sup ||P^n(z_0)-(0,0)||\ge \sigma^* .
\]    
\end{lemma}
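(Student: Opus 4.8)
The plan is to argue by contradiction along the lines of the proof of Lemma~\ref{M_2}, but the argument is considerably lighter because $M_1=\{(0,0)\}$ is a constant equilibrium: no family of perturbed periodic solutions is needed, and the whole estimate collapses onto a single scalar linear differential inequality for $B_s$. The role of the hypothesis $\int_0^1(r_s-\mu_s(t))\,dt>0$ (i.e.\ $\mathcal{R}_0^s>1$) is to make the plasmid-free compartment a weak repeller away from the origin.

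First I would introduce the perturbed exponent
$$c(t,\sigma):=r_s\Bigl(1-\tfrac{2\sigma}{K}\Bigr)-\beta\sigma-\mu_s(t),$$
observe that it is $1$-periodic and jointly continuous, and that $\int_0^1 c(t,\sigma)\,dt\to\int_0^1(r_s-\mu_s(t))\,dt>0$ as $\sigma\to0^+$, so there is $\delta>0$ with $\int_0^1 c(t,\sigma)\,dt>0$ for all $\sigma\in(0,\delta)$; fix some $\sigma_1\in(0,\delta)$. Next, since the vector field of \eqref{Bsp} is locally Lipschitz (shown in the proof of Proposition~\ref{prop2}) and $u(t,(0,0))\equiv(0,0)$, continuous dependence on initial data over the compact interval $[0,1]$ yields $\sigma^*\in(0,\sigma_1)$ such that $\|z_0\|<\sigma^*$ forces $\|u(t,z_0)\|<\sigma_1$ for all $t\in[0,1]$; this $\sigma^*$ will be the constant claimed in the statement.

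For the contradiction, suppose some $z_0\in\mathbb{D}_0$ satisfies $\limsup_{n\to\infty}\|P^n(z_0)-(0,0)\|<\sigma^*$. Since $\{B_p>0\}$ is forward invariant for \eqref{Bsp} (because $B_p\equiv 0$ is invariant), after replacing $z_0$ by $P^N(z_0)$ for a suitable $N$ I may assume $\|P^n(z_0)\|<\sigma^*$ for every $n\ge0$; then, writing any $t\ge 0$ as $t=n+\tilde t$ with $n\in\mathbb{Z}_+$, $\tilde t\in[0,1)$ and using $u(t,z_0)=u(\tilde t,P^n(z_0))$, the previous step gives $B_s(t)<\sigma_1$ and $B_p(t)<\sigma_1$ for all $t\ge0$. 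Feeding these bounds into the $B_s$-equation of \eqref{Bsp} (with $a=1$, $b=0$ by Assumption~\ref{abvalue}), using $B=B_s+B_p<2\sigma_1$, $B_p<\sigma_1$ and discarding the nonnegative plasmid-loss term $\tfrac{p}{2}r_pB_p$, I get $\dot B_s(t)\ge c(t,\sigma_1)B_s(t)$ for all $t\ge0$. Because $z_0\in\mathbb{D}_0$ keeps $B_p(t)>0$, that same plasmid-loss term makes $\dot B_s>0$ whenever $B_s=0$, so $B_s(t)>0$ for $t>0$; applying the comparison principle to $\dot w=c(t,\sigma_1)w$ from some $t_0>0$ gives $B_s(t)\ge B_s(t_0)\exp\bigl(\int_{t_0}^t c(s,\sigma_1)\,ds\bigr)$ for $t\ge t_0$, and since $c(\cdot,\sigma_1)$ is $1$-periodic with positive mean this tends to $+\infty$, contradicting $B_s(t)<\sigma_1$. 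That contradiction finishes the proof.

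I do not anticipate a real obstacle here; the two points that need care are (i) choosing the perturbation in $c(t,\sigma)$ so that the a priori bounds $B<2\sigma_1$ and $B_p<\sigma_1$ genuinely turn the nonlinear $B_s$-equation into a linear lower bound with positive-mean coefficient, and (ii) the boundary case $B_s(0)=0$ that is allowed in $\mathbb{D}_0$, which is dispatched by noting that plasmid loss instantly renders $B_s$ positive.
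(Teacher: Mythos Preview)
Your proposal is correct and follows essentially the same approach as the paper: the paper introduces the same perturbed exponent (there denoted $a(t,\sigma)=r_s(1-\tfrac{2\sigma}{K})-\mu_s(t)-\beta\sigma$), picks $\sigma^*$ via continuous dependence, assumes for contradiction that the orbit stays in a $\sigma_1$-ball about the origin, and then derives the same linear lower bound $\dot B_s>a(t,\sigma_1)B_s$ to force $B_s\to+\infty$. Your write-up is in fact slightly more careful than the paper's, since you explicitly dispatch the boundary case $B_s(0)=0$ (allowed in $\mathbb{D}_0$) via the plasmid-loss term, a point the paper leaves implicit.
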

\begin{proof}
We first consider the equation
\begin{equation}\label{auxiliaryBs}
    \hat{B}_s'(t) = r_s \hat{B}_s(t)(1-\frac{2\sigma}{K}) - \mu_s(t)\hat{B}_s(t)-\beta \sigma \hat{B}_s(t),
\end{equation}
and denote $a(t,\sigma):=r_s(1-\frac{2\sigma}{K})-\mu_s(t)-\beta\sigma$. We note that there exists $\delta>0$ so that $\int_0^1 a(t,\sigma)dt>0$ for all $0<\sigma<\delta$. Thus the solution to \eqref{auxiliaryBs} $\hat{B}_s(t,\sigma)=\hat{B}_s(0)e^{\int_0^t a(t,\sigma)dt}\to +\infty$ for all $\hat{B}_s(0)>0$.

By the continuous dependence of solutions on the initial point, for any \(\sigma_1\in(0,\delta)\), there exists a \(\sigma^*>0\) such that for any \(z_0\in \mathbb{D}_0\) with  \(
||z_0-(0,0)|| \le \sigma^*
\) we have that \(
||u(t, z_0)-(0,0)|| < \sigma_1, \forall \ t\in [0,1).
\)
We now assume for contradiction that 
\[
\lim_{n\rightarrow \infty} \sup ||P^n(z_0)-(0,0)||< \sigma^*
\]    
for some \(z_0 \in \mathbb{D}_0\).
Without loss of generality, we assume that \(
 ||P^n(z_0)-(0,0)||< \sigma^* 
\)  for some \(z_0 \in \mathbb{D}_0\), \(\forall n>0\). Thus, we have \(
||u(t, P^n(z_0))-(0,0)|| < \sigma_1, \ \forall n\ge 0
\) and for \(t\in [0,1)\).
Note that any \(t\ge0\) can be expressed as \(t=n+\tilde{t}\) with \(\tilde{t}\in[0,1)\) and \(n\in \mathbb{Z}_+\). Therefore,\[
||u(t, z_0)-(0,0)||=||u(\tilde{t}, P^n(z_0))-(0,0)|| < \sigma_1 \ \forall \  t>0\]
Substituting \(u(t,z_0)=(B_p(t),B_s(t))\) , we obtain that \(B_p(t)<\sigma_1 \ and \ \ B_s(t)<\sigma_1 \  ,\ \forall t>0.\) Then the first equation in model \eqref{Bsp} can be estimated as
$$\dot B_s(t)>r_s B_s(t)(1-\frac{2\sigma_1}{K})-\mu_s(t)B_s(t)-\beta \sigma_1 B_s(t),\,\forall t>0.$$
Then consider the auxiliary system \eqref{auxiliaryBs} and by Comparison Principle we have $B_s(t)\ge \hat{B}_s(t,\sigma_1)\to +\infty$ as $t\to +\infty$, which is a contradiction.
\end{proof}
\begin{theorem}\label{persistenceBsp}
Assume $\int_{0}^1 \bigl(r_s - \mu_s(t)\bigr)\,dt > 0$ and $\quad \text{and} \quad
\int_0^1 \Bigl(\beta B_s^*(t) + r_p\Bigl(1 - \frac{\sup_{u\in[0,1)} B_s^*(u)}{K} - \tfrac{p}{2}\Bigr) - \mu_p(t)\Bigr)\,dt > 0$,
then there exists a value $\xi > 0$ such that any solution $(B_p(t), B_s(t))$ of system~\eqref{Bsp} 
with the initial value $z_0 \in \mathbb{D}_0$ satisfies $\liminf_{t \to +\infty} B_s(t) \geq \xi
\quad \text{and} \quad
\liminf_{t \to +\infty} B_p(t) \geq \xi$, and the system~\eqref{Bsp} admits at least one positive periodic solution.
\end{theorem}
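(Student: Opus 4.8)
The plan is to apply the abstract theory of uniform persistence to the Poincar\'e map $P$ of \eqref{Bsp} and then transfer the conclusion back to the time-continuous flow. First I would collect the structural facts that set the theory in motion. The set $\mathbb{D}_0$ is forward invariant, since the $B_p$-equation of \eqref{Bsp} is linear and homogeneous in $B_p$, so $B_p(0)>0$ forces $B_p(t)>0$ for all $t$; and by Proposition~\ref{prop2}, $P$ is continuous, point dissipative, and maps bounded sets to precompact sets, hence admits a global attractor in $\mathbb{R}_+^2$. On the boundary $\partial\mathbb{D}_0=\{B_p=0\}$ the system reduces to the scalar logistic equation $\dot B_s=r_sB_s(1-B_s/K)-\mu_s(t)B_s$, so by Theorem~\ref{global_stability} (valid because $\int_0^1(r_s-\mu_s)\,dt>0$) every boundary orbit with $B_s(0)>0$ converges to $B_s^*(t)$ of \eqref{Bs*}, while $(0,0)$ repels within $\partial\mathbb{D}_0$. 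Consequently the maximal compact invariant set of $P$ inside $\partial\mathbb{D}_0$ is covered by $\{M_1,\,X_0\}$ with $M_1=\{(0,0)\}$ and $X_0(t)=(B_s^*(t),0)$; these are isolated invariant sets in $\mathbb{R}_+^2$ whose only connecting orbit is the heteroclinic running from $M_1$ to $X_0$, so the covering is acyclic.

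Next I would read Lemmas~\ref{M_1} and~\ref{M_2} as the statements that $M_1$ and the periodic orbit $X_0$ are uniform weak repellers for $\mathbb{D}_0$, i.e.\ $W^s(M_1)\cap\mathbb{D}_0=\emptyset$ and $W^s(X_0)\cap\mathbb{D}_0=\emptyset$. Point dissipativity, compactness, the acyclic covering, and these weak-repelling properties are exactly the hypotheses of the standard uniform-persistence theorem; applied to $P$ they produce a constant $\eta>0$ with $\liminf_{n\to\infty}d\bigl(P^n z_0,\partial\mathbb{D}_0\bigr)\ge\eta$ for every $z_0\in\mathbb{D}_0$, i.e.\ (using continuous dependence of solutions on data over one period) $\liminf_{t\to+\infty}B_p(t)\ge\xi_0$ for some $\xi_0>0$ independent of $z_0\in\mathbb{D}_0$. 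The same framework, via the fixed-point result that accompanies uniform persistence, shows that the global attractor of $P$ restricted to $\mathbb{D}_0$ contains a fixed point of $P$, which is the initial value of a positive $1$-periodic solution of \eqref{Bsp}.

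It remains to promote the lower bound on $B_p$ to one on $B_s$, which persistence with respect to $\partial\mathbb{D}_0$ does not give directly, since $\partial\mathbb{D}_0$ contains points with $B_s=0$. Here one uses that $\{B_s=0\}$ is not invariant: under Assumption~\ref{abvalue}, once $B_p(t)\ge\xi_0$ and $B(t)\le M$ (with $M$ the a priori bound from Proposition~\ref{prop2}) for all large $t$, the first equation of \eqref{Bsp} yields $\dot B_s(t)\ge -C\,B_s(t)+\tfrac{p}{2}r_p\xi_0$ for a constant $C>0$, whence $\liminf_{t\to+\infty}B_s(t)\ge \tfrac{p\,r_p\,\xi_0}{2C}>0$; taking $\xi$ to be the smaller of $\tfrac{p\,r_p\,\xi_0}{2C}$ and $\xi_0$ finishes the proof. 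I expect the main obstacle to be not any single estimate --- the analytic core is already supplied by Lemmas~\ref{M_1}--\ref{M_2} --- but the bookkeeping for the abstract machinery: verifying that $M_1$ and $X_0$ are isolated invariant sets forming an acyclic covering of the boundary attractor, identifying the ``tube'' $M_2$ defined before the lemmas with the periodic orbit $X_0$ required by the persistence theorem, and making the passage between the discrete map $P$ and the continuous flow rigorous.
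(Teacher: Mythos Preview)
Your proposal is correct and follows essentially the same approach as the paper: both apply the abstract uniform-persistence theory to the Poincar\'e map (the paper phrases it as a Morse decomposition $\{M_1,M_2\}$ of the boundary attractor, you as an acyclic covering) via Lemmas~\ref{M_1}--\ref{M_2}, and then invoke the companion fixed-point theorem for the positive periodic solution. Your final paragraph, deriving the uniform lower bound on $B_s$ from the one on $B_p$ via the source term $\tfrac{p}{2}r_pB_p$ in the $B_s$-equation, is in fact more careful than the paper, which simply asserts both bounds as a direct consequence of persistence with respect to $\partial\mathbb{D}_0=\{B_p=0\}$.
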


\begin{proof}
As proved in Proposition~\ref{prop2}, \(\{P^n\} _{n\ge0}\) admits a global attractor $\mathbb{A}$. Let \[\mathbb{M}_{\partial} := \left\{ x \in \partial \mathbb{D}_0 : f^n(x) \in \partial \mathbb{D}_0,\ n \geq 0 \right\} \ = \partial \mathbb{D}_0 ,\]  and \(\mathbb{A}_{\partial}= \mathbb{A} \cap \mathbb{M}_\partial = \partial \mathbb{D}_0 \). The given conditions imply that $(0,0)$ and $(B_s^*(t),0)$ are the only two equilibrium in $\mathbb{A}_{\partial}$ and both unstable. Then $M_1$ and $M_2$ are two isolated invariant sets in $\mathbb{R}_+^2$. Further, similar to the argument in Theorem \ref{global_stability} that any trajectory starts from $\partial\mathbb{D}_0\setminus M_1$ would converge to $M_2$. Thus \(\mathbb{A}_{\partial}\) admits a morse decomposition $\{M_1, M_2\}$. Lemma~\ref{M_2} and Lemma~\ref{M_1} show that 
\(W^s(M_1) \cap \mathbb{D}_0 = \emptyset\) and \(W^s(M_2) \cap \mathbb{D}_0 = \emptyset\). By Theorem 1.3.1 in \cite{Zhao2003}, it follows that $\{P^n\}_{n \geq 0}$ is uniformly persistent with respect to $(\mathbb{D}_0, \partial \mathbb{D}_0)$, and by Theorem 3.1.1 in \cite{Zhao2003}, we have the solutions of system \eqref{Bsp} being uniformly persistent with respect to $(\mathbb{D}_0, \partial \mathbb{D}_0)$. That is, there exists a $\xi > 0$ such that any solution $(B_s(t), B_p(t))$ of system \ref{Bsp} with initial value $(B_s(0), B_p(0)) \in \mathbb{D}_0$ satisfies
$$
\liminf_{t \to \infty} B_s(t) \geq \xi, \quad 
\liminf_{t \to \infty} B_p(t) \geq \xi.
$$
Note that the ultimate boundedness implies the uniform boundedness of solutions for periodic systems of ODEs \cite{Yoshizawa1975}. Thus, the Poinc\'are map $P: \mathbb{R}_+^2 \to \mathbb{R}_+^2$ is compact. By Theorem 1.3.8 in \cite{Zhao2003}, $P$ admits a fixed point $y_0 \in \mathbb{D}_0$, such that $\phi(t, y_0)$ is a periodic solution of \eqref{Bsp}. Due to the uniform persistence and periodicity of the solution we have both components of $\phi(t,y_0)$ being strictly positive for all $t \geq 0$.
\end{proof}
\begin{remark}
    Under the assumption of $p=0$, that is, assuming no plasmid loss during replication, one can obtain a uniform persistence result when $\int_{0}^1 r_s-\mu_s(t)dt>0$ and $\mathcal{R}_0>1$. Under these assumptions, we can derive an upper bound for $B_s(t)$ as $B_s^*(t)>B_s(t)>B_s^*(t)-2q_1$, instead of the estimation in \eqref{range}, and the result follows by similar arguments.
\end{remark}

\subsection{Resistance gained by spontaneous mutation}
We consider the scenario where bacteria acquire resistance through a single spontaneous mutation on the chromosome, which is modeled as a linear rate. An example of such mechanism is the resistance development of MRSA against ciprofloxacin. Denote $B_s$ and $B_m$ as the wild-type and mutant bacterial population, respectively, we have
\begin{equation}
\begin{split}
&\dot B_s(t)=r_sB_s(t)(1-\frac{B(t)}{K})-\mu_s(t)B_s(t)-\varepsilon r_sB_s(t)\\
&\dot B_m(t)=r_mB_m(t)(1-\frac{B(t)}{K})-\mu_m(t)B_m(t)+\varepsilon r_sB_s(t)
\end{split}\tag{M3} \label{Bsm}
\end{equation}
where $r_s$ and $r_m$ refer to the intrinsic growth rate for each population, and $\mu_s(t)$ and $\mu_m(t)$ are the antibiotic killing rates which we assume to be 1-periodic. $\varepsilon$ is the spontaneous mutation probability on the chromosome. In the following, we adopt the Assumption \ref{abvalue} and Notation \ref{rmaxmin}, and a similar assumption to Assumption \ref{musp}.
\begin{notation}\label{rmaxminBsm}
    $r_{\text{max}}\ =\max\{r_s, r_m\}$ and $r_{\text{min}}\ =\min\{r_s,r_m\}$.
\end{notation}
\begin{assumption}\label{musm}
    $\mu_s(t)$ and $\mu_m(t)$ are non-negative, continuous, and 1-periodic functions.
\end{assumption}
\begin{proposition}\label{prop3}
For any initial value $B_s(0),B_m(0)\ge 0$, \eqref{Bsm} has a unique global solution that is non-negative for all $t\ge 0$. Furthermore, for any $q>0$, there exists $t_q>0$ such that the solution of \eqref{Bsm} with $t>t_q$ lies in the compact set
\begin{equation}
\mathbb{D}_q \ = \ \{(B_s,B_m)\in \mathbb{R}_+^2:B\le \frac{r_{max}}{r_{min}}\cdot K +q\}.
\end{equation}
\end{proposition}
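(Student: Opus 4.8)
The plan is to follow, almost verbatim, the proof of Proposition~\ref{prop2}, since \eqref{Bsm} differs from \eqref{Bsp} only in that the nonlinear HGT term is replaced by the linear mutation flux $\varepsilon r_sB_s$. First I would record local existence and uniqueness: the right-hand side $f(t,\cdot):\mathbb{R}^2\to\mathbb{R}^2$ of \eqref{Bsm} is polynomial in $(B_s,B_m)$ with continuous $1$-periodic coefficients (Assumption~\ref{musm}), hence locally Lipschitz, so every initial datum determines a unique local solution.

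Next I would establish non-negativity by checking, via Remark~16.3(f) in \cite{AH}, that neither coordinate axis is crossed. On $\{B_s=0\}$ the first equation reads $\dot B_s=r_sB_s(1-B/K)-\mu_s(t)B_s-\varepsilon r_sB_s=0$, so $B_s\equiv0$ is consistent and $\{B_s\ge0\}$ is forward invariant; on $\{B_m=0\}$ the second equation reads $\dot B_m=\varepsilon r_sB_s\ge0$, so $B_m$ cannot become negative. Hence $\mathbb{R}_+^2$ is an invariant region. For ultimate boundedness, set $B=B_s+B_m$ and add the two equations; the mutation terms $\mp\varepsilon r_sB_s$ cancel, leaving
\[
\dot B(t)=r_sB_s(1-\tfrac{B}{K})-\mu_s(t)B_s+r_mB_m(1-\tfrac{B}{K})-\mu_m(t)B_m\le r_{max}B(t)\Bigl(1-\frac{B(t)}{\tfrac{r_{max}}{r_{min}}K}\Bigr),
\]
using $\mu_s,\mu_m\ge0$ and Notation~\ref{rmaxminBsm}. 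As in the proof of Proposition~\ref{prop2}, comparison with the logistic equation $\hat B'=r_{max}\hat B(1-\hat B/(\tfrac{r_{max}}{r_{min}}K))$ of carrying capacity $\tfrac{r_{max}}{r_{min}}K$ gives, for every $q>0$, a time $t_q>0$ with $B(t)\le\tfrac{r_{max}}{r_{min}}K+q$ for all $t>t_q$, i.e. the solution eventually lies in $\mathbb{D}_q$; in particular it stays bounded on its maximal interval, so it is global by Theorem~7.6 in \cite{AH}.

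I do not expect a genuine obstacle here: the argument is routine once Proposition~\ref{prop2} is in hand. The only point needing a moment's care is the invariance of the coordinate axes, and the one model-specific observation is that the linear mutation flux is conservative for the total population $B$ — which is exactly what allows the scalar logistic comparison to carry over unchanged.
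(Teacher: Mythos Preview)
Your proposal is correct and follows exactly the approach the paper intends: the paper's own proof simply reads ``The proof is almost identical to that of Proposition~\ref{prop2},'' and you have spelled out precisely those details, including the key observation that the mutation flux $\pm\varepsilon r_sB_s$ cancels in the equation for $B=B_s+B_m$ so that the same logistic comparison goes through.
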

\begin{proof}
The proof is almost identical to that of Proposition \ref{prop2}.
\end{proof}
We define the following reproductive numbers and periodic equilibria.
\begin{notation}
    Let $\mathcal{R}_0^{s} = \exp\!\left(\int_0^1 \bigl(r_s(1-\varepsilon) - \mu_s(t)\bigr)\,dt\right)$, 
    which measures the mu\-ltiplicative growth factor of wild-type offspring over one treatment period. 
    Denote 
    $\mathcal{R}_0^{m} = \exp\!\left(\int_0^1 \bigl(r_m - \mu_m(t)\bigr)\,dt\right)$, 
    which measures the multiplicative growth factor of mutant offspring over a treatment period.
\end{notation}

\begin{notation}
    We further have a semitrivial periodic equilibrium $(0, B_{m}^*(t))$ with
\begin{equation}\label{Bm0}
    B_{m}^*(t)=\frac{K}{r_m}\cdot\frac{e^{\int_{0}^t r_m-\mu_m(s)ds}}{\int_{0}^te^{\int_{0}^sr_m-\mu_m(u)du}ds+\frac{\int_{0}^1e^{\int_{0}^s r_m-\mu_m(u)du}ds}{e^{\int_{0}^1 r_m-\mu_m(s)ds}-1}},
\end{equation}
We further denote the stability threshold for $(0, B_{m}^*(t))$ as
\begin{equation}
\mathcal{R}_1^{s} = e^{\int_0^1 r_s(1-\varepsilon-B_{m}^*(t)/K)-\mu_s(t)dt}\,\,\text{and}\,\,\mathcal{R}_1^{m}=e^{\int_0^1 r_m-2r_m B_{m}^*(t)/K-\mu_m(t)dt}.
\end{equation}
\end{notation}
\begin{theorem}\label{theorem_Bsm}
(1) If $\max\{\mathcal{R}_0^{s},\mathcal{R}_0^m\}<1$, then $(0,0)$ is locally asymptotically stable.\\
(2) If $r_m>\int_{0}^1 \mu_m(t)dt$, $\max\{\mathcal{R}_1^{s},\mathcal{R}_1^{m}\}<1$, then the semitrivial 1-periodic solution $(0,B_{m}^*(t))$ is locally asymptotically stable.
\end{theorem}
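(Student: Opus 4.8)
The plan is to prove both parts by linearized (Floquet) stability analysis of the periodic system \eqref{Bsm}, in the same spirit as the proof of Theorem~\ref{R_0^sp}, the crucial observation being that in each case the variational system is triangular, so that the Floquet multipliers are the exponentials of the one-period integrals of the diagonal entries and can be written down explicitly.

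For part (1), I would linearize \eqref{Bsm} about the constant solution $(0,0)$. Since the mutation term $\varepsilon r_s B_s$ feeds only from $B_s$ into $B_m$, and the logistic factor $1-B/K$ contributes only terms of second order in the perturbation near the origin, the variational equation is
$$y'(t)=\begin{pmatrix} r_s(1-\varepsilon)-\mu_s(t) & 0 \\ \varepsilon r_s & r_m-\mu_m(t)\end{pmatrix}y(t),$$
which is lower triangular. By Theorem 4.2.1 in \cite{Farkas}, the monodromy matrix over one period is again lower triangular with diagonal entries $e^{\int_0^1 (r_s(1-\varepsilon)-\mu_s(t))\,dt}=\mathcal{R}_0^{s}$ and $e^{\int_0^1 (r_m-\mu_m(t))\,dt}=\mathcal{R}_0^{m}$, and these are precisely its Floquet multipliers. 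Hence $(0,0)$ is locally asymptotically stable whenever the spectral radius of the monodromy matrix, $\max\{\mathcal{R}_0^{s},\mathcal{R}_0^{m}\}$, is less than $1$.

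For part (2), I would first record the existence of the semitrivial orbit: setting $B_s\equiv 0$ reduces \eqref{Bsm} to $\dot B_m=r_mB_m(1-B_m/K)-\mu_m(t)B_m$, which is \eqref{eq:M1} with $(r,\mu)$ replaced by $(r_m,\mu_m)$, so by Theorem~\ref{global_stability} the hypothesis $r_m>\int_0^1\mu_m(t)\,dt$ (that is, $\mathcal{R}_0>1$ for this subsystem) gives a unique positive $1$-periodic solution, which upon solving the associated Bernoulli equation is exactly $B_m^*(t)$ in \eqref{Bm0}. Writing $B_s=u$, $B_m=B_m^*(t)+v$, using $1-B/K=1-B_m^*(t)/K$ up to first order in $(u,v)$ together with the ODE satisfied by $B_m^*$, the variational system becomes
$$\begin{pmatrix}\dot u\\ \dot v\end{pmatrix}=\begin{pmatrix} r_s\bigl(1-\varepsilon-B_m^*(t)/K\bigr)-\mu_s(t) & 0 \\ \varepsilon r_s-r_m B_m^*(t)/K & r_m\bigl(1-2B_m^*(t)/K\bigr)-\mu_m(t)\end{pmatrix}\begin{pmatrix}u\\ v\end{pmatrix},$$
again lower triangular, so its monodromy matrix has diagonal entries $\mathcal{R}_1^{s}$ and $\mathcal{R}_1^{m}$. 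Therefore $(0,B_m^*(t))$ is locally asymptotically stable when $\max\{\mathcal{R}_1^{s},\mathcal{R}_1^{m}\}<1$. (Alternatively, one could treat $B_s$ as the ``infectious'' compartment invading the $B_m$-only state and invoke Theorems 2.1--2.2 in \cite{Wang} exactly as in Theorem~\ref{R_0^sp}(2), recovering the same threshold.)

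I do not expect a genuine obstacle here: the computation is routine and the triangular structure removes any need for the full next-generation construction. The only points needing a little care are (a) verifying that $(0,B_m^*(t))$ is indeed an orbit of the full system and that the closed form \eqref{Bm0} coincides with the periodic solution supplied by Theorem~\ref{global_stability}, and (b) observing that only \emph{local} asymptotic stability is claimed, so the strict inequalities are used solely to guarantee hyperbolicity of the Poincar\'e map (the borderline cases $\mathcal{R}=1$ being left undecided), with no comparison or persistence arguments required.
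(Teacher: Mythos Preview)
Your proposal is correct and follows essentially the same approach as the paper: direct linearization at each equilibrium, observation that the variational system is lower triangular, and identification of the Floquet multipliers with $\mathcal{R}_0^{s},\mathcal{R}_0^{m}$ (resp.\ $\mathcal{R}_1^{s},\mathcal{R}_1^{m}$) via Theorem~4.2.1 in \cite{Farkas}. The paper's proof of part~(2) here is in fact the direct Floquet computation you describe (not the \cite{Wang} machinery used for Theorem~\ref{R_0^sp}(2)), so your primary argument matches exactly; your extra remark on the existence of $B_m^*(t)$ via Theorem~\ref{global_stability} is a slight elaboration of what the paper states in one line.
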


\begin{proof}
(1) Linearizing the model \eqref{Bsm} at $(0,0)$ we have
$$y'(t)=\begin{pmatrix}
r_s(1-\varepsilon)-\mu_s(t)&0\\
\varepsilon r_s&r_m-\mu_m(t)
\end{pmatrix} y(t).$$
The monodromy matrix is calculated as :
$\begin{pmatrix}
\mathcal{R}_0^{s}&0\\
\mathcal{R}_0^{sm} &\mathcal{R}_0^{m}
\end{pmatrix}$,
with the characteristic multipliers being $\mathcal{R}_0^m$ and $\mathcal{R}_0^{sm}$. The result follows by Theorem $4.2.1$ in \cite{Farkas}.

(2) It is easy to check that the solution $B_{m}^*(t)$ in \eqref{Bm0} is a 1-periodic and non-negative solution to \eqref{Bsm} when $r_m>\int_0^1 \mu_m(t)dt$. The linearized model at $(0,B_{m}^*(t))$ is
$$y'(t)=\begin{pmatrix}
r_s(1-\varepsilon-B_m^*(t)/K)-\mu_s(t) & 0\\
\varepsilon r_s - r_mB_m^*(t)/K & r_m-2r_mB_{m}^*(t)/K-\mu_m(t)\\
\end{pmatrix}y(t).$$
We then have the characteristic multipliers as $\mathcal{R}_1^{s }$ and $\mathcal{R}_1^m$, and the result follows.
\end{proof}
\begin{notation}\label{mumin2}
$\mu_{\text{min}}\ =\min\{\mu_s(t), \mu_m(t)\}$.
\end{notation}
\begin{theorem}
If $\int_0^1r_{\max}-\mu_{\min}\ du<0$, then $(0,0)$ is a globally stable solution.
\end{theorem}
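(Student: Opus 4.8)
The plan is to mirror the argument already used for the analogous statement in the plasmid model \eqref{Bsp}: collapse the two-dimensional system \eqref{Bsm} to a scalar differential inequality for the total population and then apply the comparison principle. Set $B(t)=B_s(t)+B_m(t)$. Adding the two equations of \eqref{Bsm}, the mutation flux terms $\mp\,\varepsilon r_s B_s(t)$ cancel exactly, leaving
\[
\dot B(t)=r_sB_s(t)\Bigl(1-\tfrac{B(t)}{K}\Bigr)+r_mB_m(t)\Bigl(1-\tfrac{B(t)}{K}\Bigr)-\mu_s(t)B_s(t)-\mu_m(t)B_m(t).
\]
By Proposition~\ref{prop3} the solution stays in $\mathbb{R}_+^2$, so $1-B(t)/K\le 1$, $\;\mu_s(t)B_s(t)+\mu_m(t)B_m(t)\ge \mu_{\min}(t)\,B(t)$, and $r_sB_s(t)+r_mB_m(t)\le r_{\max}\,B(t)$. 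Hence $\dot B(t)\le\bigl(r_{\max}-\mu_{\min}(t)\bigr)B(t)$ for all $t\ge 0$.

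Next I would integrate this inequality: Gronwall (equivalently, the comparison principle against $\hat B'(t)=(r_{\max}-\mu_{\min}(t))\hat B(t)$) gives $B(t)\le B(0)\exp\!\bigl(\int_0^t(r_{\max}-\mu_{\min}(u))\,du\bigr)$. Since $\mu_{\min}(t)=\min\{\mu_s(t),\mu_m(t)\}$ is continuous and $1$-periodic, writing $t=n+\tilde t$ with $n\in\mathbb{Z}_+$ and $\tilde t\in[0,1)$ gives
\[
\int_0^t(r_{\max}-\mu_{\min}(u))\,du=n\int_0^1(r_{\max}-\mu_{\min}(u))\,du+\int_0^{\tilde t}(r_{\max}-\mu_{\min}(u))\,du,
\]
in which the second term is bounded uniformly in $t$. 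Under the hypothesis $\int_0^1(r_{\max}-\mu_{\min}(u))\,du<0$ the first term tends to $-\infty$, so the whole integral tends to $-\infty$ and $B(t)\to 0$ as $t\to\infty$; since $B_s,B_m\ge 0$ this forces $B_s(t)\to 0$ and $B_m(t)\to 0$. If one wants the full "globally stable" (i.e. globally asymptotically stable) conclusion rather than mere global attractivity, the same decomposition also shows $C:=\sup_{t\ge 0}\exp\!\bigl(\int_0^t(r_{\max}-\mu_{\min}(u))\,du\bigr)<\infty$, whence $B(t)\le C\,B(0)$ and the origin is Lyapunov stable.

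There is no genuine obstacle here — the cancellation of the mutation term is exactly what makes the scalar reduction clean, just as in the plasmid case. The only point requiring a little care is the periodicity step: one must use that $\mu_{\min}(t)$, being the pointwise minimum of two continuous $1$-periodic functions, is itself continuous and $1$-periodic, so that the sign of $\int_0^1(r_{\max}-\mu_{\min})$ controls the long-time behavior of $\int_0^t(r_{\max}-\mu_{\min})$. Everything else is routine.
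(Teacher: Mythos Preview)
Your proposal is correct and follows essentially the same approach the paper uses for the analogous result in model~\eqref{Bsp}: sum the two equations so the genetic-exchange terms cancel, bound $\dot B\le (r_{\max}-\mu_{\min})B$, and integrate. The paper gives no separate proof for the \eqref{Bsm} version, evidently intending this same argument; your handling of the periodicity step via $t=n+\tilde t$ is in fact cleaner than the paper's one-line bound $e^{\int_0^t}\le e^{t\int_0^1}$ in the \eqref{Bsp} proof, and your added remark on Lyapunov stability is a bonus the paper omits.
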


Next, we will examine the uniform persistence of Model \eqref{Bsm}. Let $\mathbb{D}_0 = \{(B_s, B_m\\
)\in \mathbb{R}_+^2 : B_s > 0\}$, and let $\partial \mathbb{D}_0 = \mathbb{R}_+^2 \setminus \mathbb{D}_0$. At the same time, define $M_1=\{(0,0)\}$ and $M_2=\{(0,B_m)\in \partial \mathbb{D}_0: \sup\limits_{t\in[0,1)}B^*_m(t)>B_m>\inf\limits_{t\in[0,1)}B^*_m(t)\}$, as well as the Poincar\'e map \( P: \mathbb{R}_+^2 \rightarrow \mathbb{R}_+^2 \) associated with system \eqref{Bsm} by
\[
P(z_0) = u(1, z_0)
\]
with \( z_0 \in \mathbb{R}_+^2 \), where \( u(t, z_0) \) is the unique solution of system \eqref{Bsm} with
\[
u(0, z_0) = z_0 = \big(B_s(0), B_m(0)\big).
\]
Then
\[
P^n(z_0) = u(n, z_0) \quad \text{for } n \geq 0.
\]

\begin{lemma}\label{M_2_sm}
If $r_m>\int_0^1\mu_m(t)dt$ and $\int_0^1 r_s(1-\sup_{u\in[0,1)}B_m^*(u)/K-\varepsilon)-\mu_s(t)\\
dt>0$, there exists a $\sigma^*>0$ such that for any $z_0=(B_m(0),B_s(0))\in \mathbb{D}_0$ we have that \[
\lim_{n\rightarrow \infty} \sup \inf\limits_{t\in[0,1)}||P^n(z_0)-(B^*_{m}(t),0)||\ge \sigma^* .
\]    
\end{lemma}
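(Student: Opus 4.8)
The plan is to reproduce the trapping-and-blow-up scheme used for Lemma~\ref{M_1}, with the semitrivial orbit $X_0(t):=(B_m^*(t),0)$ now playing the role of the point $(0,0)$ there. Note first that the hypothesis $r_m>\int_0^1\mu_m(t)\,dt$ makes $\mathcal{R}_0=e^{r_m-\int_0^1\mu_m(t)\,dt}>1$ for the scalar logistic equation $\dot B_m=r_mB_m(1-B_m/K)-\mu_m(t)B_m$, so by Theorem~\ref{global_stability} the function $B_m^*(t)$ in \eqref{Bm0} is its unique positive $1$-periodic solution and is globally attractive on $\mathbb{R}_+\setminus\{0\}$; in particular $X_0$ is a genuine periodic orbit of \eqref{Bsm} lying in $\partial\mathbb{D}_0$. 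Unlike in Lemma~\ref{M_2}, no auxiliary perturbed periodic solution is needed here, because the mutation flux enters the wild-type equation only through the loss term $-\varepsilon r_sB_s$; consequently the argument only ever needs an upper bound on the total population near $X_0$.

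\textbf{Trapping step.} Suppose for contradiction that $\limsup_{n\to\infty}\inf_{t\in[0,1)}\|P^n(z_0)-X_0(t)\|<\sigma^*$ for some $z_0\in\mathbb{D}_0$, where $\sigma^*$ is to be fixed. Discarding finitely many iterates (and using Proposition~\ref{prop3} so that the remaining iterates lie in a compact absorbing set $\mathbb{D}_q$) we may assume $\inf_{t\in[0,1)}\|P^n(z_0)-X_0(t)\|<\sigma^*$ for every $n\ge0$. Fix a small $\sigma_1>0$ (constrained below), and use the uniform continuity of the flow of \eqref{Bsm} on $[0,1]\times\mathbb{D}_q$ to choose $\sigma^*\in(0,\sigma_1)$ so small that $\inf_{s\in[0,1)}\|z-X_0(s)\|\le\sigma^*$ forces $\inf_{s\in[0,1)}\|u(t,z)-X_0(s)\|<\sigma_1$ for all $t\in[0,1]$. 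Writing an arbitrary $\tau\ge0$ as $\tau=n+\tilde t$ with $\tilde t\in[0,1)$, using $u(\tau,z_0)=u(\tilde t,P^n(z_0))$ and the $1$-periodicity of $X_0$, this yields $\inf_{s\in[0,1)}\|u(\tau,z_0)-X_0(s)\|<\sigma_1$ for all $\tau\ge0$; hence, along $u(\cdot,z_0)=(B_m(\cdot),B_s(\cdot))$,
\[
B_s(\tau)<\sigma_1\quad\text{and}\quad B_m(\tau)<\sup_{s\in[0,1)}B_m^*(s)+\sigma_1\qquad\text{for all }\tau\ge0.
\]

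\textbf{Blow-up step.} From these bounds, $B(\tau)=B_s(\tau)+B_m(\tau)<\sup_{s}B_m^*(s)+2\sigma_1$, so the wild-type equation in \eqref{Bsm} gives, for all $\tau\ge0$,
\[
\dot B_s(\tau)\ge\Bigl(r_s\bigl(1-\tfrac{\sup_{s}B_m^*(s)+2\sigma_1}{K}-\varepsilon\bigr)-\mu_s(\tau)\Bigr)B_s(\tau)=:c(\tau,\sigma_1)B_s(\tau).
\]
Since $\int_0^1c(\tau,\sigma_1)\,d\tau=\int_0^1\bigl(r_s(1-\sup_sB_m^*(s)/K-\varepsilon)-\mu_s(\tau)\bigr)\,d\tau-\tfrac{2r_s\sigma_1}{K}$ and the first integral is strictly positive by hypothesis, we now fix $\sigma_1$ small enough that $\int_0^1c(\tau,\sigma_1)\,d\tau>0$. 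Comparing $B_s$ with the scalar linear $1$-periodic equation $\hat B_s'=c(\tau,\sigma_1)\hat B_s$ started at $\hat B_s(0)=B_s(0)>0$, the Comparison Principle gives $B_s(\tau)\ge B_s(0)\exp\!\bigl(\int_0^\tau c(u,\sigma_1)\,du\bigr)\to+\infty$ as $\tau\to+\infty$, contradicting $B_s(\tau)<\sigma_1$. Hence the required $\sigma^*>0$ exists.

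\textbf{Expected main difficulty.} The comparison estimates and the choice of constants are routine; the delicate point is the trapping step, namely making precise that closeness of the discrete Poincar\'e iterates $P^n(z_0)$ to the orbit $X_0$ upgrades --- uniformly in $n$ --- to closeness of the whole continuous trajectory $u(\cdot,z_0)$ to $X_0$ over each period. This is where one must invoke continuous dependence on initial conditions on the compact absorbing set of Proposition~\ref{prop3} and exploit the $1$-periodicity of both the system and $X_0$, exactly as in the corresponding step of Lemma~\ref{M_2}. The only place the strict sign in the second hypothesis is genuinely used is in absorbing the $-2r_s\sigma_1/K$ correction so that $\int_0^1c(\tau,\sigma_1)\,d\tau>0$.
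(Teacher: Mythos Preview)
Your proposal is correct and follows essentially the same approach as the paper: trap the full trajectory in a $\sigma_1$-tube around the periodic orbit $X_0$ via continuous dependence, use the resulting upper bound on $B=B_s+B_m$ to compare the $B_s$-equation with a scalar linear $1$-periodic equation whose period integral is positive, and derive $B_s\to\infty$ for the contradiction. Your bookkeeping is in fact slightly cleaner than the paper's (you track the $2\sigma_1$ correction explicitly and invoke the absorbing set from Proposition~\ref{prop3} for the uniform-continuity step).
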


\begin{proof}
Consider the auxiliary equation
\begin{equation}\label{auxiliaryBs2}
\hat{B}_s'(t)=a(t,\sigma)\hat{B}_s(t)
\end{equation}
with $a(t,\sigma):=r_s(1-\frac{r_{max}}{r_{min}}-\sigma/K)-\mu_s(t)-\varepsilon r_s$. There exists $\delta>0$ so that for $0<\sigma<\delta$ we have $\int_0^1 a(t,\sigma)dt>0$ and we have the solution to the auxiliary equation $\hat{B}_s(t)=e^{\int_0^1 a(t,\sigma)dt}\hat{B}_s(0)\to +\infty$ as $t\to +\infty$.

By the continuous dependence of solutions on the initial point, for any $\sigma\in(0,\delta)$, there exists a \(\sigma^*>0\) such that any \(z_0\in \mathbb{D}_0\) with  \(
\inf\limits_{s\in[0,1)}||z_0-(0,B_m^*(s))|| \le \sigma^*
\) implies that \(
\inf\limits_{s\in[0,1)}||u(t, z_0)-(0,B_m^*(s)))|| < \sigma, \forall \ t\in [0,1).
\)
We now assume for contradiction that \[
\lim_{n\rightarrow \infty} \sup  \inf\limits_{t\in[0,1)}||P^n(z_0)-(0,B_m^*(t))||< \sigma^* 
\] for some \(z_0 \in \mathbb{D}_0\).
Without loss of generality, we assume that \(
 \inf\limits_{s\in[0,1)}||P^n(z_0)-(0,B_m^*(t))||< \sigma^* 
\)  for some \(z_0 \in \mathbb{D}_0\), \(\forall n>0\). Thus, we have \(
\inf\limits_{s\in[0,1)}||u(t, P^n(z_0))-(0,B_m^*(s))|| < \sigma, \ \forall n\ge 0
\) and for \(t\in [0,1)\).
Note that any \(t\ge0\) can be expressed as \(t=n+\tilde{t}\) with \(\tilde{t}\in[0,1)\) and $n\in\mathbb{Z}_+$. Therefore,\[
\inf\limits_{s\in[0,1)}||u(t, z_0)-(0,B_m^*(s))||=\inf\limits_{s\in[0,1)}||u(\tilde{t}, P^n(z_0))-(0,B_m^*(s))|| < \sigma \ \forall \  t>0\]
Substituting \(u(t,z_0)=(B_s(t),B_m(t))\) , we obtain that 
\[
    B_s(t)<\sigma\,\,\text{and}\,\,B_m(t)<\sup_{u\in[0,1)}B_m^*(u)+\sigma,\, \forall t>0.
\]
Then the first equation in model \eqref{Bsm} can be estimated as
\[
\dot B_s(t)>r_sB_s(t)(1-\sup_{u\in[0,1)}B_m^*(u)/K-\sigma/K)-\mu_s(t)B_s(t)-\varepsilon r_sB_s(t),
\]
for $t>0$. By similar arguments as in Lemma \ref{M_2} we have $B_s(t)\to +\infty$ as $t\to +\infty$, which is a contradiction.
\end{proof}

\begin{lemma}\label{M_1_sm}
If $\int_0^1 r_m-\mu_m(t)dt>0$, then there exists a $\sigma^*>0$ such that for any $z_0=(B_m(0),B_s(0))\in \mathbb{D}_0$ we have that \[
\lim_{n\rightarrow \infty} \sup ||P^n(z_0)-(0,0)||\ge \sigma^* .
\]    
\end{lemma}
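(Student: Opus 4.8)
The plan is to mirror the argument of Lemma~\ref{M_1}, but now using the mutant equation as the ``repelling'' direction, since the hypothesis $\int_0^1 (r_m-\mu_m(t))\,dt>0$ is exactly a net-growth condition on $B_m$ near the origin. First I would introduce the auxiliary linear equation $\hat B_m'(t)=a(t,\sigma)\hat B_m(t)$ with $a(t,\sigma):=r_m(1-2\sigma/K)-\mu_m(t)$. Since $a(t,\sigma)$ is continuous in $\sigma$ and $\int_0^1 a(t,0)\,dt=\int_0^1(r_m-\mu_m(t))\,dt>0$, there is a $\delta>0$ such that $\int_0^1 a(t,\sigma)\,dt>0$ for every $\sigma\in(0,\delta)$; hence $\hat B_m(t)=\hat B_m(0)\,e^{\int_0^t a(s,\sigma)\,ds}\to+\infty$ as $t\to+\infty$ whenever $\hat B_m(0)>0$.

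Next, fixing $\sigma_1\in(0,\delta)$, I would invoke continuous dependence on initial data (the trajectory through $(0,0)$ is the constant $(0,0)$) to obtain a $\sigma^*>0$ such that $\|z_0-(0,0)\|\le\sigma^*$ forces $\|u(t,z_0)-(0,0)\|<\sigma_1$ for all $t\in[0,1)$. Then I argue by contradiction: suppose $\limsup_{n\to\infty}\|P^n(z_0)-(0,0)\|<\sigma^*$ for some $z_0\in\mathbb{D}_0$. Discarding finitely many iterates we may assume $\|P^n(z_0)-(0,0)\|<\sigma^*$ for all $n\ge 0$, hence $\|u(t,P^n(z_0))-(0,0)\|<\sigma_1$ for all $n\ge 0$ and $t\in[0,1)$. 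Writing $t=n+\tilde t$ with $\tilde t\in[0,1)$ and using $P^n(z_0)=u(n,z_0)$ gives $\|u(t,z_0)-(0,0)\|<\sigma_1$ for all $t>0$, i.e. $B_s(t)<\sigma_1$ and $B_m(t)<\sigma_1$ for all $t>0$, so that $B(t)=B_s(t)+B_m(t)<2\sigma_1$.

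With this bound, the second equation of \eqref{Bsm} gives, for $t>0$,
$$\dot B_m(t)=r_mB_m(t)\Bigl(1-\tfrac{B(t)}{K}\Bigr)-\mu_m(t)B_m(t)+\varepsilon r_sB_s(t)>r_mB_m(t)\Bigl(1-\tfrac{2\sigma_1}{K}\Bigr)-\mu_m(t)B_m(t)=a(t,\sigma_1)B_m(t),$$
where the nonnegative mutation influx $\varepsilon r_sB_s(t)$ was dropped. Since $z_0\in\mathbb{D}_0$ we have $B_s(0)>0$, hence $B_s(t)>0$ and therefore $B_m(t)>0$ for all $t>0$ (the term $\varepsilon r_s B_s$ makes $B_m$ strictly positive immediately). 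Picking any $t_0>0$ and comparing with the auxiliary solution started at $\hat B_m(t_0)=B_m(t_0)>0$, the Comparison Principle yields $B_m(t)\ge \hat B_m(t)=B_m(t_0)\,e^{\int_{t_0}^t a(s,\sigma_1)\,ds}\to+\infty$ as $t\to+\infty$ (recall $\sigma_1<\delta$), contradicting $B_m(t)<\sigma_1$.

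This argument is essentially routine given Lemma~\ref{M_2_sm} and the parallel reasoning in Lemma~\ref{M_1}; the only point that needs a little care is that $\mathbb{D}_0$ only guarantees $B_s(0)>0$, not $B_m(0)>0$, so one must first observe that the mutation term forces $B_m(t)>0$ for $t>0$ before the comparison with the growing auxiliary solution applies. Beyond the bookkeeping of the constants $\delta,\sigma_1,\sigma^*$, I do not anticipate a genuine obstacle.
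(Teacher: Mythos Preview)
Your proposal is correct and follows essentially the same approach as the paper: the same auxiliary linear equation $\hat B_m'=\bigl(r_m(1-2\sigma/K)-\mu_m(t)\bigr)\hat B_m$, the same continuity-of-exponent argument to find $\delta$, the same continuous-dependence step to produce $\sigma^*$, and the same contradiction via comparison. Your write-up is in fact slightly more careful than the paper's, since you explicitly address why $B_m(t)>0$ for $t>0$ when $\mathbb{D}_0$ only guarantees $B_s(0)>0$; the paper glosses over this point (and its displayed estimate contains a typo, writing $-\mu_s(t)B_s(t)$ where $-\mu_m(t)B_m(t)$ is intended).
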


\begin{proof}
For the perturbed second equation in model \eqref{Bsm}, we have
\begin{equation} 
    \dot{\hat{B}}_m(t)=b(t,\sigma)\cdot \hat{B}_m(t)\label{perturbed_sm_m},
\end{equation}
where \(b(t,\sigma)=r_m(1-\frac{2\sigma}{K})-\mu_m(t)\). There exists a $\delta>0$ so that for any $\sigma\in(0,\delta)$ we have $\int_0^1 b(t,\sigma)dt>0$. Then the solution to \eqref{perturbed_sm_m} $\hat{B}_m(t)=e^{\int_0^t b(s,\sigma)ds}\hat{B}_m(0)\to +\infty$ as $t\to +\infty$.

By the continuous dependence of solutions on the initial point, for such $\sigma$, there exists a \(\sigma^*>0\) such that any \(z_0\in \mathbb{D}_0\) with  \(
||z_0-(0,0)|| \le \sigma^*
\) implies that \(
||u(t, z_0)-(0,0)|| < \sigma, \forall \ t\in [0,1).
\)
We now assume for contradiction that \(
\lim_{n\rightarrow \infty} \sup  ||P^n(z_0)-(0,0)||< \sigma^* 
\) for some \(z_0 \in \mathbb{D}_0\).
Without loss of generality, we assume that \(
 ||P^n(z_0)-(0,0)||< \sigma^* 
\)  for some \(z_0 \in \mathbb{D}_0\), \(\forall n>0\). Thus, we have \(
||u(t, P^n(z_0))-(0,0)|| < \sigma, \ \forall n\ge 0
\) and for \(t\in [0,1)\).
Similar to the proof of Lemma \ref{M_2_sm}, we have \[
||u(t, z_0)-(0,0)||< \sigma \ \forall \  t>0.\]
And we obtain that \(B_m(t)<\sigma \ and \ \ B_s(t)<\sigma \  ,\ \forall t>0.\)

Estimating the second equation in model \eqref{Bsm}, we have
\[
\dot B_m(t)>r_mB_m(t)(1-\frac{2\sigma}{K})-\mu_s(t)B_s(t),
\]
By similar arguments as in Lemma \ref{M_2} we have $B_m(t)\to +\infty$ as $t\to +\infty$, which is a contradiction. 
\end{proof}

\begin{theorem}\label{persistenceBsm}
If $\int_{0}^1 r_m-\mu_m(t)dt>0$ and $\int_0^1 r_s(1-\sup_{u\in[0,1)}B_m^*(u)/K-\varepsilon)-\mu_s(t) dt>0$, then there exists a value $\xi > 0$ such that any solution $(B_s(t), B_m(t))$ of system \eqref{Bsm} with the initial value $z_0 \in \mathbb{D}_0$ satisfies \(\liminf\limits_{t \to +\infty} B_m(t) \geq \xi, \ \liminf\limits_{t \to +\infty} B_s(t) \geq \xi,\) and system \eqref{Bsm} admits at least one positive periodic solution.
\end{theorem}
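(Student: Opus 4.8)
The plan is to mirror the proof of Theorem~\ref{persistenceBsp}, exchanging the roles of the compartments: here $\mathbb{D}_0=\{(B_s,B_m)\in\mathbb{R}_+^2:B_s>0\}$, so the relevant boundary face is $\{B_s=0\}$, on which only the mutant strain survives. First I would observe that $\partial\mathbb{D}_0=\{B_s=0\}$ is forward invariant for \eqref{Bsm}, since $B_s(0)=0$ forces $\dot B_s\equiv0$ and hence $B_s(t)\equiv0$; thus $\mathbb{M}_\partial=\{x\in\partial\mathbb{D}_0:P^n(x)\in\partial\mathbb{D}_0,\ n\ge0\}=\partial\mathbb{D}_0$. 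On this face \eqref{Bsm} reduces to the scalar logistic equation $\dot B_m=r_mB_m(1-B_m/K)-\mu_m(t)B_m$, which is precisely \eqref{eq:M1} with $r=r_m$ and $\mu=\mu_m$. Its reproduction number is $e^{\int_0^1 (r_m-\mu_m(t))\,dt}>1$ by hypothesis, so Theorem~\ref{global_stability} applies: every orbit with $B_m(0)>0$ converges to the $1$-periodic solution $B_m^*(t)$ given in \eqref{Bm0}, while the orbit with $B_m(0)=0$ remains at $(0,0)$.

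By Proposition~\ref{prop3}, $\{P^n\}_{n\ge0}$ admits a global attractor $\mathbb{A}$; set $\mathbb{A}_\partial=\mathbb{A}\cap\mathbb{M}_\partial$. The previous paragraph shows that the only invariant sets in $\mathbb{A}_\partial$ are $M_1=\{(0,0)\}$ and the periodic orbit $M_2$ associated with $B_m^*(t)$, that both are isolated, and that every trajectory in $\partial\mathbb{D}_0\setminus M_1$ converges to $M_2$; hence $\{M_1,M_2\}$ is an acyclic Morse decomposition of $\mathbb{A}_\partial$, with the only connections running from $M_1$ to $M_2$. Moreover $M_1$ is unstable within $\partial\mathbb{D}_0$ because $\mathcal{R}_0^m>1$. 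Lemma~\ref{M_1_sm} (invoking $\int_0^1(r_m-\mu_m(t))\,dt>0$) gives $W^s(M_1)\cap\mathbb{D}_0=\emptyset$, and Lemma~\ref{M_2_sm} (invoking $\int_0^1\bigl(r_s(1-\sup_{u\in[0,1)}B_m^*(u)/K-\varepsilon)-\mu_s(t)\bigr)\,dt>0$) gives $W^s(M_2)\cap\mathbb{D}_0=\emptyset$. Then Theorem~1.3.1 in \cite{Zhao2003} yields uniform persistence of $\{P^n\}_{n\ge0}$ with respect to $(\mathbb{D}_0,\partial\mathbb{D}_0)$, and Theorem~3.1.1 in \cite{Zhao2003} transfers this to the flow of \eqref{Bsm}, producing $\xi>0$ with $\liminf_{t\to\infty}B_s(t)\ge\xi$ for every initial value in $\mathbb{D}_0$; since the term $\varepsilon r_sB_s$ in the $B_m$-equation is then eventually bounded below by a positive constant while $B_m$ is bounded above, a comparison argument (shrinking $\xi$ if necessary) also gives $\liminf_{t\to\infty}B_m(t)\ge\xi$.

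For the positive periodic solution, Proposition~\ref{prop3} shows the solutions are ultimately bounded, which for a periodic ODE system implies uniform boundedness \cite{Yoshizawa1975}; hence $P$ is compact, and by Theorem~1.3.8 in \cite{Zhao2003} it has a fixed point $y_0\in\mathbb{D}_0$, so $\phi(t,y_0)$ is a $1$-periodic solution of \eqref{Bsm}. Uniform persistence together with periodicity then forces both components of $\phi(t,y_0)$ to be strictly positive for all $t\ge0$. The only verification not already packaged in Lemmas~\ref{M_1_sm} and~\ref{M_2_sm} is that the boundary dynamics form an acyclic Morse decomposition with exactly the pieces $M_1$ and $M_2$; but since that face is one-dimensional and governed by the logistic equation \eqref{eq:M1}, this is immediate from Theorem~\ref{global_stability}. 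I therefore do not anticipate a serious obstacle: the delicate uniform-repulsion estimates are already isolated in the two lemmas, and the remainder is bookkeeping parallel to Theorem~\ref{persistenceBsp}.
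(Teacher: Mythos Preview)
Your proof is correct and follows essentially the same route as the paper's, which simply says ``the proof is based on Lemma~\ref{M_1_sm} and Lemma~\ref{M_2_sm} and similar to that in Theorem~\ref{persistenceBsp}.'' If anything you are slightly more careful than the paper: you explicitly note that uniform persistence with respect to $(\mathbb{D}_0,\partial\mathbb{D}_0)=\{B_s>0\}$ only directly yields $\liminf B_s\ge\xi$, and you supply the easy comparison argument (using the $\varepsilon r_sB_s$ influx into the $B_m$-equation) to obtain $\liminf B_m\ge\xi$, a step the paper glosses over in the analogous Theorem~\ref{persistenceBsp}.
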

\begin{proof}
The proof is based on Lemma \ref{M_1_sm} and Lemma \ref{M_2_sm} and similar to that in Theorem \ref{persistenceBsp}.
\end{proof}

\begin{remark}\label{persistencecomparison}
    Comparing the conditions for uniform persistence in systems \eqref{Bsp} and \eqref{Bsm}, we find that, under similar settings for the resistant strains (that is, assume $r_p=r_m$ and $\mu_p(\cdot)=\mu_m(\cdot)$), system \eqref{Bsp} exhibits uniform persistence over a wider range of parameters. Thus, resistance acquired through plasmid transmission is a mechanism more likely to promote the persistence of antibiotic resistance.
\end{remark}

\section{Parameterization}
We parameterize the time-dependent bacterial killing rate function, $\mu_i(t),\, i = s,p,m$, as the composition of two components: the time-dependent antibiotic concentration function, $C(t)$, and the concentration-dependent killing rate function, $\mu_i(C),\, i = s,p,m$. Parameter values are based on MRSA bacteria and the antibiotic moxifloxacin, and are summarized in Table \ref{table}.

\subsection{Concentration-dependent killing rates}
We adopt two key pharmacokinetic (PK) concepts to quantify the killing rate of bacterial cells under varying antibiotic concentrations. The minimum inhibitory concentration (MIC) is defined as the lowest antibiotic concentration that inhibits the growth of a susceptible bacterial population, while the mutant prevention concentration (MPC) is the lowest antibiotic concentration that prevents the growth of resistant mutants \cite{Drlica2007,Drlica2006,Metzler2004,Zhao2001}. For any given bacterial isolate and antibiotic, both MIC and MPC can be experimentally determined, with MPC typically greater than MIC. The interval $[\text{MIC}, \text{MPC}]$ defines the so-called mutant selection window (MSW) - a “dangerous zone” in which antibiotic concentrations may suppress susceptible bacteria while simultaneously favoring the selection and growth of resistant subpopulations.

For each bacterial population type, the antibiotic killing rate depends on the actual drug concentration. Various pharmacodynamic (PD) models have been developed to capture this concentration-dependent killing effect, with MIC and MPC values serving as key parameters in these functions. A comprehensive summary of such models is provided by Imran and Smith \cite{Imran2006}. In this study, we illustrate this approach using the Hill function as an example:
\begin{equation}
\mu_i(C):=\frac{\mu_i^{\text{max}}\cdot(\frac{C}{\text{MIC}_i})^{k_i}}{(\frac{C}{\text{MIC}_i})^{k_i}-1+\frac{\mu_i^{\text{max}}}{r_i}},\,i=s,p,m.
\end{equation}
As an example, in the expression for $\mu_s$, the MIC corresponds to the minimum inhibitory concentration of the susceptible bacteria ($B_s$). Here, $r_s$ denotes the intrinsic growth rate, $\mu_s^{\text{max}}(>r_s)$ is the maximal drug-induced killing rate, and $k_s$ is the Hill function shape parameter, which we set to $k_s=1$ for simplicity. The function $\mu_s(C)$ satisfies the following properties:
\begin{itemize}
    \item $\mu_s(\text{MIC})=r_s$, reflecting the definition of MIC as the concentration threshold where drug killing balances bacterial growth;
    \item $\mu_s(0)=0$, indicating zero killing in the absence of antibiotics;
    \item $\lim_{C\to+\infty}\mu_s(C)=\mu_s^{\text{max}}$, representing the maximal killing achievable at high concentrations;
    \item $\mu_s'(C)>0$, ensuring that the killing rate increases monotonically with concentration.
\end{itemize}
Bacterial populations in $B_p$ and $B_m$ are antibiotic resistant, with killing rates intersect their respective net growth rates at higher drug concentrations ($\text{MIC}_i,\,i=p,m$). In particular, it is natural to assume that $\text{MIC}_i$ for $i=p,m$ lies within the mutant selection window $[\text{MIC},\text{MPC}]$.

\subsection{Time-dependent antibiotic concentration}
For a precise adherence to the treatment schedule, we model the in-host antibiotic concentration $C(t)$ as a periodic function resulting from repeated dosing:
{\small
\begin{equation}\label{Concentration}
C(t)=\frac{1}{2}(C_{\text{max}}+C_{\text{min}})+\frac{1}{2}(C_{\text{max}}-C_{\text{min}})\sin(\frac{2\pi}{p}t-\frac{\pi}{2}).
\end{equation}}
Within each dosing cycle of length $p$, the concentration oscillates between the maximal level $C_{\text{max}}$ and the minimal level $C_{\text{min}}$. The period $p$ corresponds to the treatment duration between two successive doses. The area under the curve (AUC) value over a full day is then defined as 
\begin{equation}\label{AUC}
\text{AUC} = \int_0^1 C(t)dt.
\end{equation}
We investigate the effects of once-daily dosing ($p=1$) and twice-daily dosing ($p=\tfrac{1}{2}$), under the assumption that both regimens deliver the same total daily amount of antibiotic. In the twice-daily protocol, the total daily dose is evenly divided between two administrations. For the once-daily scenario, we adopt the reported $C_{\text{max}}$ value from the moxifloxacin drug label and calibrate $C_{\text{min}}$ to match the specified AUC value \cite{FDA}. For the twice-daily scenario, we assume $C_{\text{max}}$ values that are proportionally reduced relative to the once-daily case and parameterize $C_{\text{min}}$ accordingly, ensuring that the AUC is preserved across dosing strategies.

\subsection{Bang-bang type killing rate}
We aim to investigate strategies for managing missed doses, with the goal of identifying optimal responses when a scheduled administration is not taken as planned. Two intuitive remedial approaches are considered: (a) a {\bf catch-up} strategy, in which the missed dose is taken immediately upon recognizing the delay, and (b) a {\bf compensatory double-dose} strategy, in which a double dose is administered at the next scheduled dosing time. Modeling these scenarios using traditional PK/PD frameworks would require complex assumptions about how drug concentrations evolve under each remedial action. To enable a more tractable analysis and extract qualitative insights, we adopt a simplified framework based on a bang-bang control function, which approximates dosing behavior as a binary on/off regime.

We define the efficacy of each antibiotic dose, as reflected in the killing rate $\mu_i(t)$ ($i = s,p,m$) for each bacterial compartment, using a piecewise-defined function:
$$
\mu_i(t) = 
\begin{cases}
\mu_i^{\mathrm{on}}, & t \in [0, \tau) \\
0,       & t \in [\tau, 1)
\end{cases},
$$
where $\mu_i^{\mathrm{on}}$ denotes the constant killing rate during the drug effective period, and $\tau$ specifies the duration of this effect within a normalized dosing cycle. The interval $[0,1)$ represents one full dosing cycle, scaled to unit length for convenience.

For simplicity, we assume that the drug effective period is shorter than the duration of each dosing cycle ({\it i.e.} we assume short-half-life drugs). A straightforward calculation shows that, in order for the bacteria-free equilibrium to remain stable, the cumulative killing effect over a full 24-hour period must exceed the bacterial growth potential - a condition equivalent to requiring that the AUC be sufficiently large. To illustrate, we examine a twice-daily regimen in which doses are scheduled at 12:00 a.m. and 12:00 p.m., with each administration conferring an effective killing duration of six hours. In this setup, we further assume that any missed dose occurs at 12:00 a.m. and is recognized only at 6:00 a.m.

In the simulations, model parameters are selected such that the bacteria-free equilibrium remains stable under full adherence to the dosing schedule but becomes unstable when half of the scheduled doses are missed. To ensure a fair comparison between the two models, we assign equivalent MIC values to the resistant strains in both cases.

\section{Simulation Results}
All simulations are based on a daily oral dose of 400 mg moxifloxacin. For the once-daily regimen, we set $C_{\text{max}} = 3$ mg/L and use an AUC of $36.1$ mg$\cdot$h/L, as reported in the moxifloxacin drug label \cite{FDA}. The corresponding minimal concentration, $C_{\text{min}} = 0.33$ mg/L, is calculated to be consistent with the specified AUC. For the twice-daily regimen, we assume that the maximal concentration $C_{\text{max}}$ is proportional to that of the once-daily regimen, with the proportionality factor chosen to be no less than 56\% - the minimum ratio that guarantees a non-negative minimal concentration $C_{\text{min}}$.

\subsection{MSW, $C_{\text{max}}$, and treatment efficacy}
PK/PD principles emphasize that the width of the mutant selection window (MSW) strongly influences treatment efficacy: regimens associated with wider MSWs can suppress a broader range of MIC values. Under the assumption of a fixed AUC across protocols, and using the antibiotic concentration model \eqref{Concentration}, lower $C_{\text{max}}$ values correspond to wider MSWs (Figure \ref{fig:illustration}).

In our framework, treatment efficacy is evaluated using the threshold MIC value, defined as the MIC level at which the basic reproductive number satisfies $\mathcal{R}_0 = 1$. At or below this threshold, the treatment protocol is sufficient to eradicate the bacterial population. Consequently, a higher threshold MIC value reflects a more effective regimen.

For the model without genetic alteration (M1), the threshold MIC under the twice-daily regimen decreases as the maximal antibiotic concentration $C_{\text{max}}$ increases (Figure \ref{fig:1}$(a)$). Thus, in this simple setting, the PK/PD concept of MSW width is consistent with the threshold MIC–based measure of treatment efficacy. However, in more complex models such as (M2) and (M3), this relationship no longer holds. As shown in Figure \ref{fig:2}, the threshold MIC does not decrease monotonically with increasing $C_{\text{max}}$. This indicates that the MSW concept alone may be insufficient for evaluating treatment efficacy. The quantitative behavior of the killing rate throughout a treatment plays a more direct role in shaping bacterial dynamics, particularly in systems where populations are structured by multiple resistance levels.

\subsection{Impacts of resistance development mechanism}
We investigate the impacts of resistance development mechanism on two aspects: infection clearance and promotion of resistant genes.

{\bf Infection clearance.}
Figure \ref{fig:2}(a) compares the threshold MIC values for resistant bacterial populations under different resistance development mechanisms. Our analysis shows that antibiotic treatments are more effective at clearing infections when resistance arises through plasmid transmission. These findings provide a theoretical rationale for the use of fourth-generation fluoroquinolones, which should further delay resistance emergence, since MRSA would require both a plasmid and a chromosomal mutation to achieve full resistance.

{\bf Promotion of resistant genes.}
For the one-step resistance development mechanism, as illustrated by the simulation in Figure \ref{fig:2}(b), both pathways facilitate the expansion of resistant bacterial strains, and clinical symptoms fail to improve following treatment initiation. However, the mutation-induced resistance mechanism more strongly accelerates the growth of resistant bacteria, allowing them to dominate the population soon after treatment begins. 

\subsection{Protocol for missed doses}
We use bang-bang type functions to investigate how to deal with a missed dose during a 10-day antibiotic therapy, assuming short half-life drugs. We compared the effects of the catch-up strategy and the compensatory double-dose strategy on the total bacterial load at the end of the treatment, relative to a standard twice-daily schedule.

Our simulations show that, regardless of the resistance development mechanism, a catch-up dose would mitigate the harm of a missed dose. For infections with bacteria having low MIC values, the catch-up strategy would yield almost the same outcome as the baseline schedule. For infections triggered by bacteria with high MIC values, the catch-up strategy would only increase the total final bacterial load by at most 4\%. Additionally, the catch-up strategy would work slightly better than the baseline strategy if performed within the first few days of treatment.

On the other hand, compensatory double dosing is not recommended under any conditions. Furthermore, if the bacteria acquire resistance via spontaneous mutation, the compensatory double strategy would result in a significantly higher bacterial load than the other mechanisms. 


These findings suggest that the catch-up strategy is an overall recommendation for a missed dose, and a missed dose should be especially avoided when nearing the end of the treatment. These findings align with the general advice in pharmaceutics for short half-life drugs, whereas double dosing may be a better choice for long half-life drugs \cite{Counterman2021}.

\section{Discussion}
In this study, we used mathematical models to demonstrate that bacterial resistance mechanisms against antibiotics can substantially influence treatment outcomes. Theoretically, we found that treatment success thresholds differ depending on the underlying mechanism, as do the conditions for bacterial persistence. Our numerical simulations further showed that, under identical pharmacodynamic conditions, plasmid-mediated resistance is associated with higher rates of treatment success and is less likely to select for resistant strains during failed treatments. This observation aligns with clinical evidence, such as the higher population-level resistance rate to vancomycin compared with ciprofloxacin, despite vancomycin having been developed decades earlier. These findings suggest that differences in resistance development mechanisms may play a crucial role in shaping resistance prevalence.

We also showed that a twice-daily dosing schedule consistently outperforms a once-daily regimen for both resistance mechanisms, and that a catch-up dosing strategy is preferable when doses are missed. In contrast, the compensatory double-dose strategy leads to treatment failure, with particularly poor outcomes in the case of mutation-mediated resistance.

A limitation of this study is that we did not incorporate immune system dynamics, which play an essential but complex role in infection clearance. As an approximation, one could assume that an equilibrium is bacteria-free if the total bacterial population falls below $10^2,\text{CFU}/\text{mL}$, which is the detection limit for bacterial load and a reasonable threshold at which the immune system can clear infection.

Our parameterization was based on the fourth-generation fluoroquinolone moxifloxacin, for which bacteria must acquire two consecutive resistance determinants to achieve full resistance. To model this class of fluoroquinolones more generally, one should consider scenarios where bacteria require two consecutive mutations. For example, in MRSA infections treated with moxifloxacin, resistance typically requires both a plasmid-borne determinant and a chromosomal mutation. We refer to the constant-coefficient ODE model investigated in \cite{DDR2021} as a foundation for this framework.

Although our simulations considered a simplified resistance scenario, they capture realistic features of MRSA treatment with moxifloxacin. For a given MRSA isolate, microbiologists often presume that either the plasmid or chromosomal resistance gene is pre-existing, while spontaneous mutations are unlikely to occur within infections of low bacterial load. In this context, model \eqref{Bsp} represents the case where a plasmid gene is pre-existing, corresponding to higher $\text{MIC}_i\,(i=s,p)$ values for the chromosomal gene when present, and lower values when absent. In contrast, model \eqref{Bsm} reflects the situation where no plasmid gene is initially present and a large bacterial load increases the probability that spontaneous mutation will occur during infection.

\section*{Acknowledgements}
This research was partially supported by the National Science Foundation (DMS-2052648).

\section*{Conflict of Interest}
The authors declare no conflict of interest.

\newpage

\begin{table}
\small
\begin{threeparttable}
    \centering
    \caption{List of Parameters and their Values}
    \begin{tabular}{|l|p{5cm}|l|l|} \hline 
    Symbol & Definition & Value & Ref.\\ 
    \hline
    $r$ & intrinsic growth rate & $2.7726^{(i)}$/day & \cite{DAgata2008} \\  
    \hline
    $p$ & plasmid loss prob. during division & 0.4 & \cite{DAgata2008}\\
    \hline
    $K$ & carrying capacity & 1  & \\  
    \hline
    $\beta$ & HGT flux rate & $10^{-13}/\text{day}$ & \cite{Lopatkin2016} \\  
    \hline
    $a$ & HGT dependence on donor density & $0.5$ & \cite{DDR2021} \\ 
    \hline
    $b$ & HGT dependence on donor frequency & $0.5$ & \cite{DDR2021}\\ 
    \hline
    $\varepsilon$ & mutation rate & $10^{-7}$ & \cite{Szafrańska2019}\\
    \hline
    $C_{\max}$ & max serum moxifloxacin conc. & $3^{(ii)}$ mg/L & \cite{FDA}\\ 
    \hline
    $C_{\min}$ & min serum moxifloxacin conc. & $0.33^{(iii)}$ mg/L & \cite{FDA} \\ 
    \hline
    $\mu_{\max}$ & max drug killing rate & $8.7726^{(iv)}$/day & \cite{Lemaire2011}\\ 
    \hline
    $\text{MIC}_i$ & min inhibitory conc. & Varied$^{(v)}$ & \cite{Andrews1999}\\
    \hline
    $k$ & Hill parameter & 1 & Text \\ 
    \hline
    \end{tabular}
    \begin{tablenotes}
        \item[$(i)$] Uniform growth rate for all bacterial populations.
        \item[$(ii)$] $C_{\max}$ refers to once-daily serum moxifloxacin conc. (single oral dose) \cite{FDA}.
        \item[$(iii)$] $C_{\min}$ refers to once-daily serum conc., calculated from AUC in moxifloxacin facts \cite{FDA}.
        \item[$(iv)$] $\mu_{\max}$ estimated assuming $2$–$3\log_{10}$ cfu/mg reduction within 24h \cite{Lemaire2011}.
        \item[$(v)$] Susceptible strains: MIC $0.125\!\sim\!1$; resistant: MIC $2\!\sim\!4$.
    \end{tablenotes}
    \label{table}
\end{threeparttable}
\end{table}

\begin{figure}
    \centering
    \includegraphics[width=0.75\linewidth]{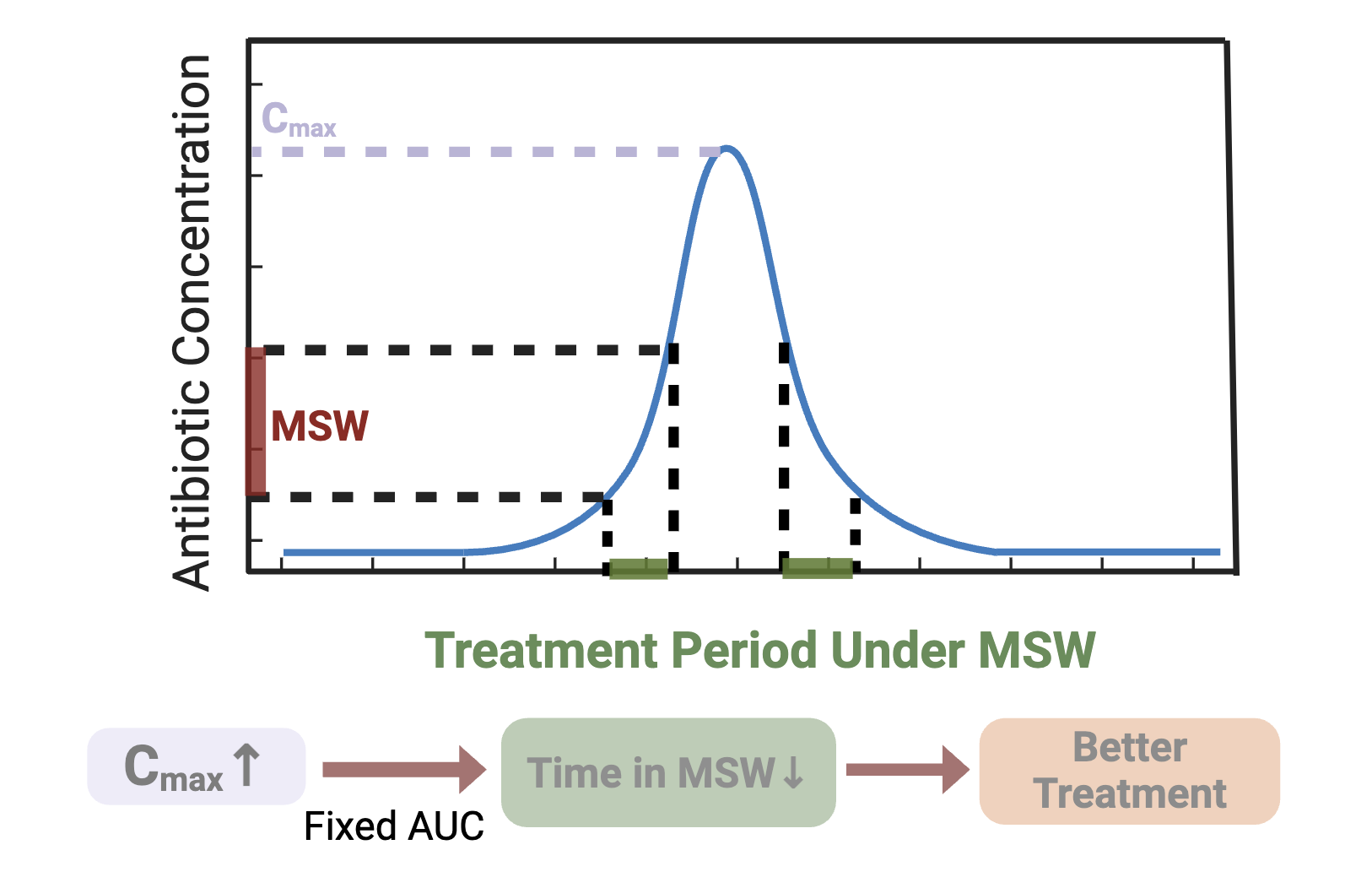}
    \caption{Relationship between MSW and $C_{\text{max}}$ under fixed AUC. A higher $C_{\text{max}}$ sharpens the concentration–time profile and shortens exposure within the MSW, minimizing resistance risk; lower $C_{\text{max}}$ prolongs MSW exposure.}
    \label{fig:illustration}
\end{figure}

\begin{figure}
\begin{subfigure}{0.48\textwidth}
    \centering
    \includegraphics[width=0.99\linewidth]{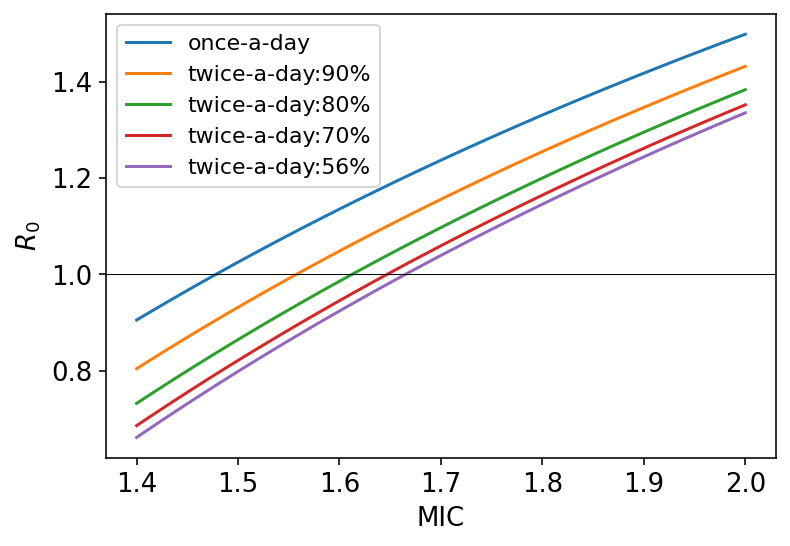}
    \caption{$R_0$ as a function of MIC}
\end{subfigure}
\begin{subfigure}{0.48\textwidth}
    \centering
    \includegraphics[width=0.99\linewidth]{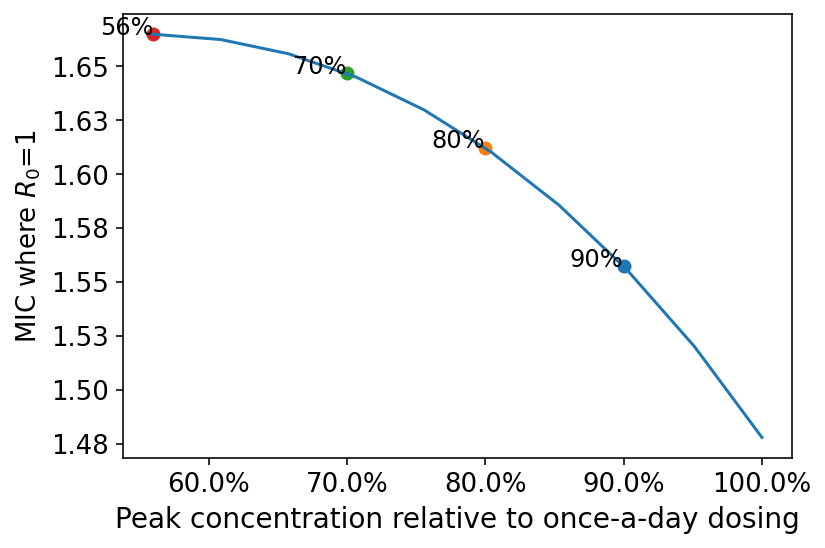}
    \caption{Maximal MIC for infection clearance}
\end{subfigure}
\caption{Simulation of Model (M1). $(a)$ The basic reproductive number $\mathcal{R}_0$ is plotted as a function of the MIC for once-daily and twice-daily regimens. In the figure, it is assumed that the maximal antibiotic concentration for the twice-daily strategy achieves 56\%, 70\%, 80\%, and 90\% of the maximal concentration for the once-daily strategy. $(b)$ Threshold MIC value for $\mathcal{R}_0=1$ under various possibilities of the maximal concentration for the twice-daily regimen. The horizontal axis shows the ratio of $C{\text{max}}$ for the twice-daily regimen relative to the once-daily regimen.}
\label{fig:1}
\end{figure}

\begin{figure}
\begin{subfigure}{0.48\textwidth}
    \includegraphics[width=0.99\linewidth]{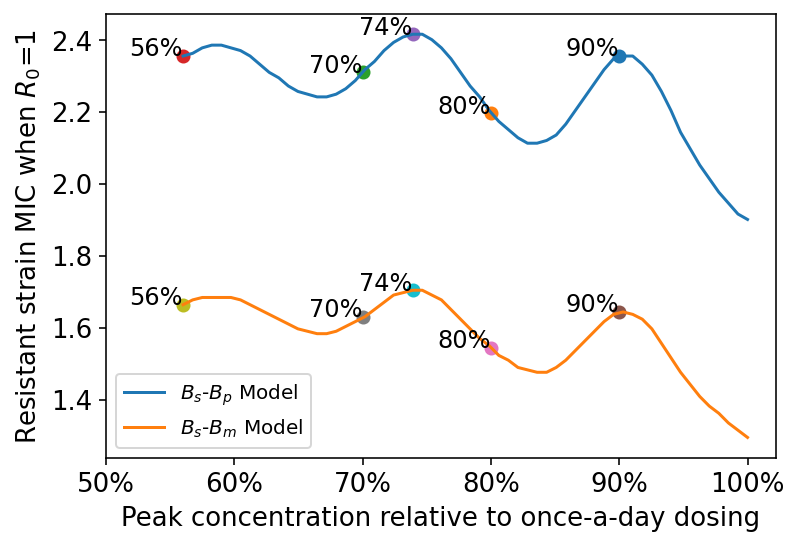}
    \caption{}
\end{subfigure}
\begin{subfigure}{0.48\textwidth}
    \includegraphics[width=0.99\linewidth]{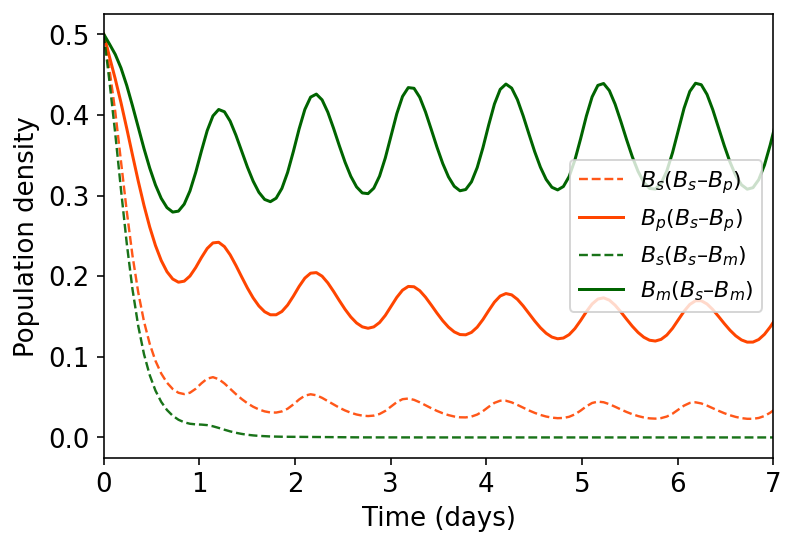}
    \caption{}
\end{subfigure}
\caption{Simulations for Models (M2) and (M3). (a) Threshold MIC for Infection Clearance under Different Resistance Development Mechanisms. Simulations are based on models (M2) and (M3), where the threshold MIC values for the resistant bacterial population ($\text{MIC}_p$ and $\text{MIC}_m$, respectively) were calculated such that $\mathcal{R}_0 = 1$ in each model. The horizontal axis shows the ratio of $C{\text{max}}$ for the twice-daily regimen relative to the once-daily regimen. (b) In both models, we set $\text{MIC}_s = 0.5$ and assign identical resistant strain values, $\text{MIC}_p = \text{MIC}_m = 3$.}
\label{fig:2}
\end{figure}

\begin{figure}
\begin{subfigure}{0.48\textwidth}
    \includegraphics[width=0.99\linewidth]{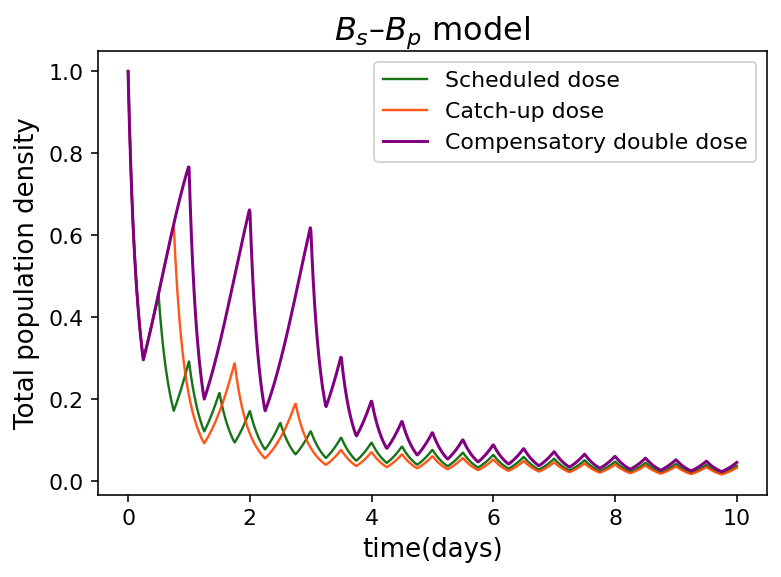}
    \caption{Missed Dose in the First Three Days}
\end{subfigure}
\begin{subfigure}{0.48\textwidth}
    \includegraphics[width=0.99\linewidth]{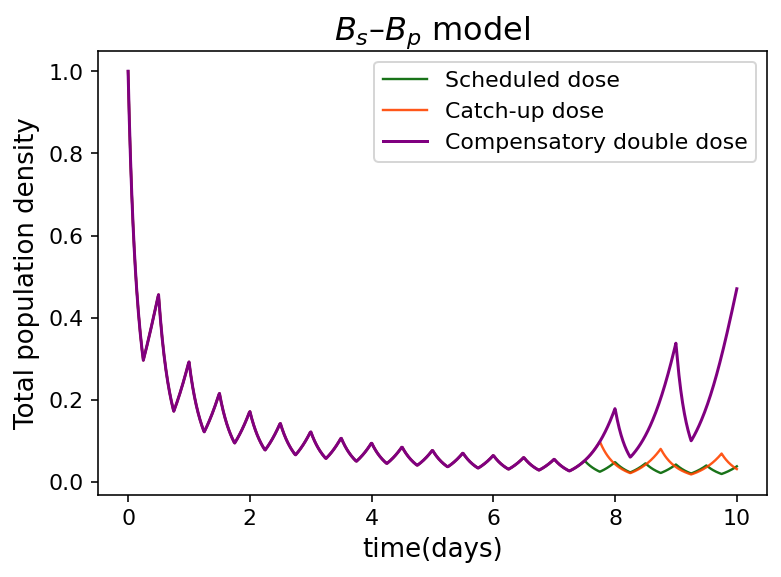}
    \caption{Missed Dose in the Last Three Days}
\end{subfigure}
\caption{The Impact of Make-Up Protocols on Total Bacterial Load Following Missed Doses. The figure illustrates the total bacterial load over time for different missed-dose protocols in a twice-daily dosing regimen, as simulated by Model \eqref{Bsp}. The missed dose is realized midway between consecutive scheduled doses. Parameters were specifically chosen to demonstrate that a twice-daily schedule effectively clears the bacterial infection, whereas a once-daily regimen with the same dose amount at each intake fails to do so.}
\end{figure}
     
\clearpage         
\phantomsection    
\bibliographystyle{siamplain}

\end{document}


\maketitle

\section{Proof of Proposition 2.1}
\begin{proof}
By Theorem 7.4 in \cite{AH}, the local existence and uniqueness of the solution is guaranteed if $f(t,B) = rB(1-B/K)-\mu(t)B$ is continuous for $t\in \mathbb{R}_+$ and Lipschitz continuous for $B\in \mathbb{R}_+$. The continuity of $f(\cdot, B)$ is obvious under Assumption 2.1. For any $t\in \mathbb{R}_+$, we have
\begin{equation*}
\left|f(t,B_1)-f(t,B_2)\right| \le \bigl(r+\left|B_1+B_2\right|+\max_{0\le t<1}\mu(t)\bigr)\cdot \left|B_1-B_2\right|,
\end{equation*}
hence the local Lipschitz continuity of $f(t,\cdot)$.

By Remark $16.3(f)$ in \cite{AH}, the solution $B(t)$ with $B(0)\ge 0$ will stay non-negative since $B(t)\equiv 0$ is the unique global solution with $B(0)=0$.

In (M1), we have that $\dot B \le rB(1-\frac{B}{K})$. By the Comparison Principle we have $B(t)\le \max\{K,B(0)\}$. We thus have the global existence and uniqueness of the solution following Theorem 7.6 in \cite{AH}. 
\end{proof}

\section{Proof of Theorem 2.1}

\begin{proof}
It is easy to verify that
\begin{equation} \label{single_sol}
	\varphi(t,B(0))=\frac{KB(0)e^{\int_{0}^t r-\mu(s)ds}}{rB(0)\int_{0}^te^{\int_{0}^sr-\mu(u)du}ds+K}
\end{equation}
is a solution to (M1).
Let 
$$\varphi(1,B_0^*)=\frac{K B_0^* e^{\int_{0}^1 r-\mu(s)ds}}{rB_0^* \int_{0}^1e^{\int_{0}^sr-\mu(u)du}ds+K}=B_0^*,$$ we get $$B_0^*=\frac{K}{r}\cdot \frac{e^{\int_{0}^1 r-\mu(s)ds}-1}{\int_{0}^1e^{\int_{0}^sr-\mu(u)du}ds }.$$
Thus 
\begin{equation} \label{Phi}
	\Phi(t)=\varphi(t,0,x_0^*)=\frac{K}{r}\cdot\frac{e^{\int_{0}^t r-\mu(s)ds}}{\int_{0}^te^{\int_{0}^sr-\mu(u)du}ds+\frac{\int_{0}^1e^{\int_{0}^sr-\mu(u)du}ds}{e^{\int_{0}^1 r-\mu(s)ds}-1}}
\end{equation}
is the unique, non-constant, and 1-periodic solution when $r>\int_0^1 \mu(s)ds$. By Theorem 4.22 in \cite{JK}, the local stability of $\Phi(t)$ is guaranteed when the Poincare map $\pi(B_0):=\phi(1,B_0)$ satisfies $\pi'(B_0*)<1$, which is a condition equivalent to $r>\int_0^1 \mu(s)ds$. 

Next we show the global stability of $\Phi(t)$. Given another initial value $y_0>0$, we have
\begin{align*}
&\bigl|\frac{1}{\varphi(t,0,x_0^*)}-\frac{1}{\varphi(t,0,y_0)}\bigr|\\
=&e^{\int_{0}^t \mu(u)-rdu}\cdot \frac{r\int_{0}^te^{\int_{0}^sr-\mu(u)du}ds+\frac{K}{x_0^*}}{K}-e^{\int_{0}^t \mu(u)-rdu}\cdot \frac{r\int_{0}^te^{\int_{0}^sr-\mu(u)du}ds+\frac{K}{y_0^*}}{K}\\
=&e^{\int_{0}^t \mu(u)-rdu}\bigl|\frac{1}{x_0^*}-\frac{1}{y_0^*}\bigr|\\
\le& e^{t\int_{0}^1 \mu(u)-rdu}\bigl|\frac{1}{x_0^*}-\frac{1}{y_0^*}\bigr|
\end{align*}

If $\int_{0}^1 r-\mu(u)du>0$, we have that $e^{\int_{0}^1 \mu(u)-rdu}<1$, thus, $\lim\limits_{t\rightarrow\infty} \bigl|\frac{1}{\varphi(t,0,x_0^*)}-\frac{1}{\varphi(t,0,y_0)}\bigr|= 0$.
Notice that 
\begin{align*}
\bigl|\varphi(t,0,x_0^*)-\varphi(t,0,y_0^*)\bigr|&=\bigl|\frac{1}{\varphi(t,0,x_0^*)}-\frac{1}{\varphi(t,0,y_0)}\bigr|\cdot\bigl|\varphi(t,0,y_0)\varphi(t,0,x_0^*)\bigr|
\end{align*}
Thus, with $\lim\limits_{t\rightarrow\infty} |\varphi(t,0,x_0^*)|=\lim\limits_{t\rightarrow\infty}\varphi(t,0,y_0)|= K$, we have that $\lim\limits_{t\rightarrow\infty} |\varphi(t,0,x_0^*)-\varphi(t,0,y_0)|= 0$ and obtain the global stability of $\Phi(t)$ under the condition that $\int_{0}^1 r-\mu(u)du>0$.

If $r< \int_{0}^1 \mu(s)ds$, with $B(0)> 0$, we have $\phi(t,B(0))\ge 0$ and
\begin{flalign*}
    \phi(t,B(0))&\le B(0)e^{\int_0^{\lfloor t \rfloor}r-\mu(s)ds}\cdot e^{\int_0^{t-\lfloor t \rfloor} r-\mu(s)ds}\\
    & \le B(0)\bigl(e^{\int_0^1 r-\mu(s)ds}\bigr)^{{\lfloor t \rfloor}} \rightarrow 0.
\end{flalign*}
Thus the zero solution is globally asymptotically stable.
\end{proof}

\section{More Simulation Results}
\begin{figure}
\begin{subfigure}{0.48\textwidth}
    \includegraphics[width=0.95\linewidth]{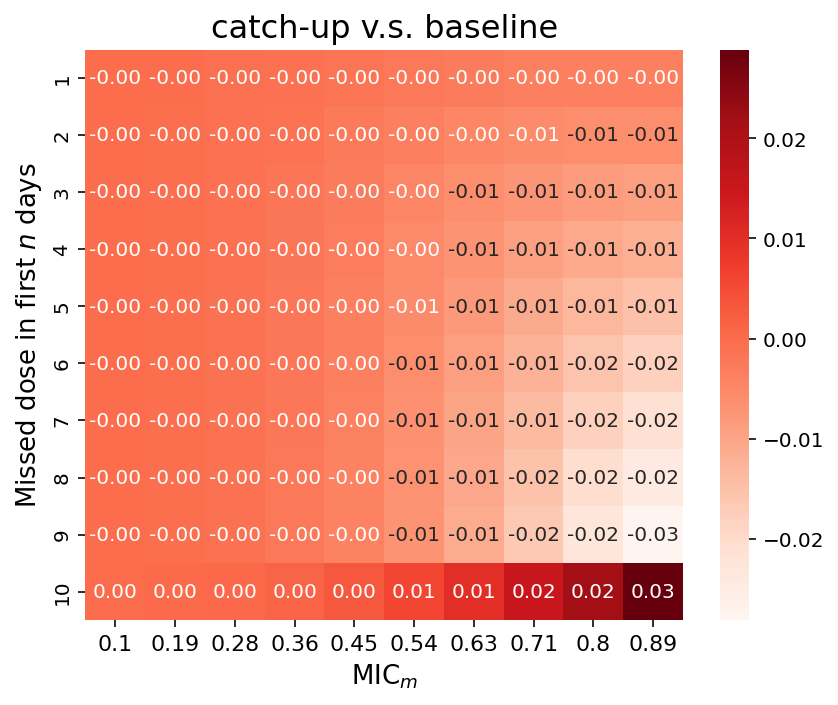}
    \caption{Mutation-induced Resistance}
\end{subfigure}
\begin{subfigure}{0.48\textwidth}
    \includegraphics[width=0.95\linewidth]{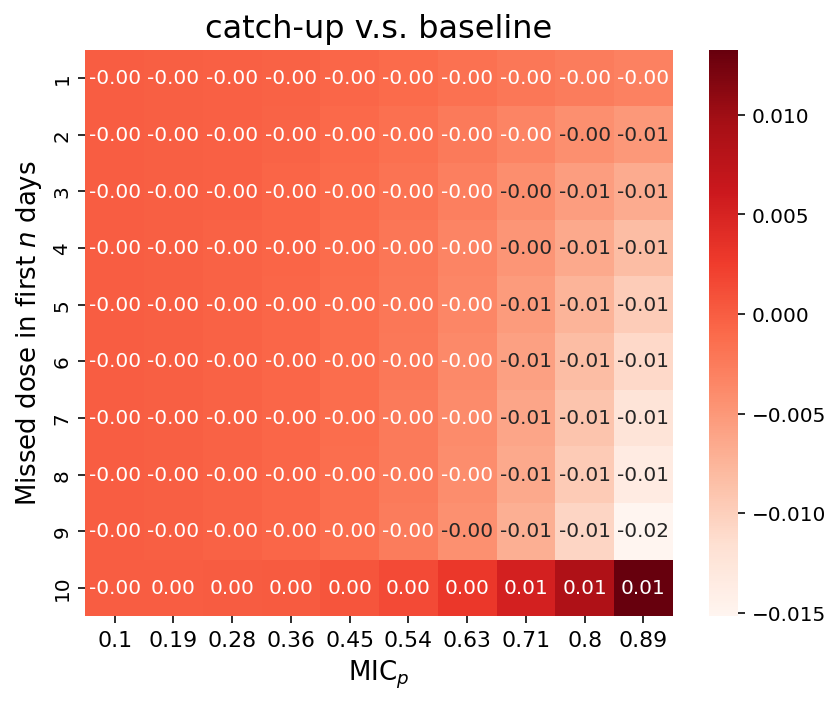}
    \caption{Plasmid-induced Resistance}
\end{subfigure}

\begin{subfigure}{0.48\textwidth}
    \includegraphics[width=0.95\linewidth]{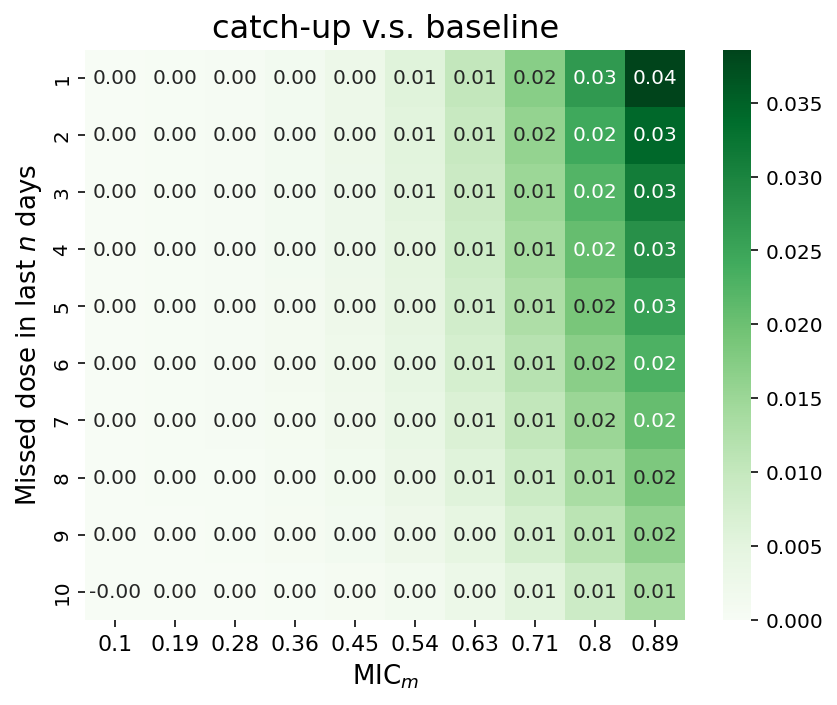}
    \caption{Mutation-induced Resistance}
\end{subfigure}
\begin{subfigure}{0.48\textwidth}
    \includegraphics[width=0.95\linewidth]{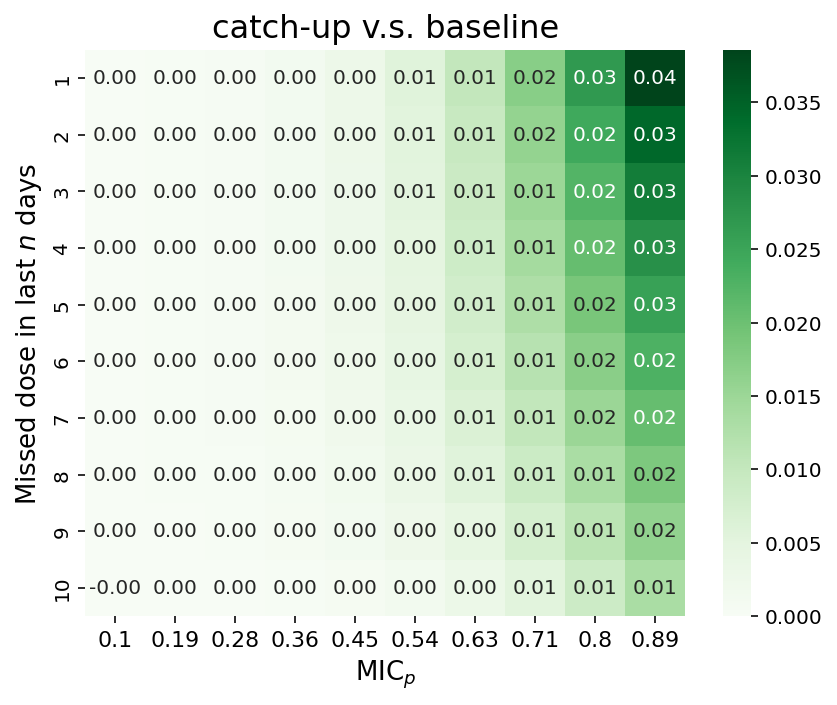}
    \caption{Plasmid-induced Resistance}
\end{subfigure}

\begin{subfigure}{0.48\textwidth}
    \includegraphics[width=0.95\linewidth]{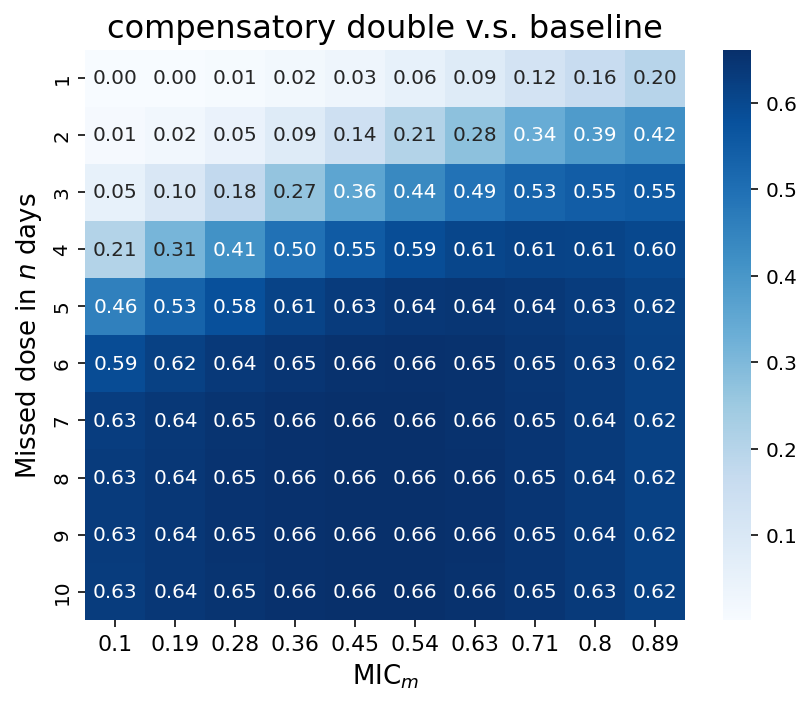}
    \caption{Mutation-induced Resistance}
\end{subfigure}
\begin{subfigure}{0.48\textwidth}
    \includegraphics[width=0.95\linewidth]{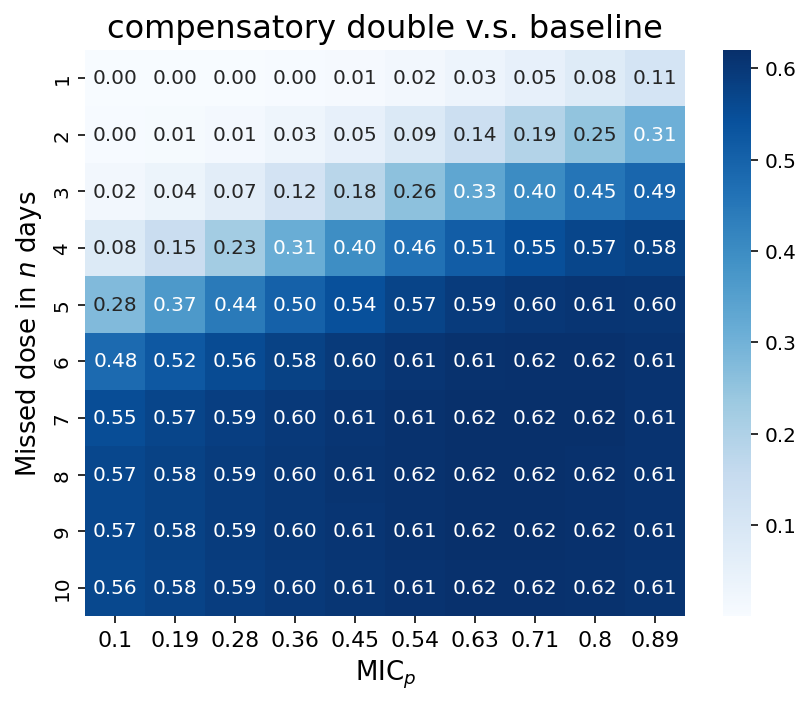}
    \caption{Plasmid-induced Resistance}
\end{subfigure}
    \caption{The Impact of Make-Up Protocols on Final Bacterial Load Following Missed Doses. The figures present the difference in final total bacterial load between various make-up protocols and a twice-daily baseline strategy. Results are shown for both Model \Mtwo{} and \Mthree{}, and are evaluated for missed doses occurring either early or late in the treatment regimen.}
    \label{fig:heatmap}
\end{figure}

\bibliographystyle{siamplain}